\documentclass[lettersize,journal]{IEEEtran}
\IEEEaftertitletext{\vspace{-1.8\baselineskip}}

\usepackage{algorithm}
\usepackage{amsmath,amsfonts,amssymb}
\usepackage{amsthm}
\usepackage{dsfont}
\usepackage{array}
\usepackage[caption=false,font=normalsize,labelfont=sf,textfont=sf]{subfig}
\usepackage{textcomp}
\usepackage{orcidlink}

\usepackage{titlesec}
\titlespacing*{\section}{0pt}{0.6\baselineskip}{0.3\baselineskip}
\titlespacing*{\subsection}{0pt}{0.5\baselineskip}{0.25\baselineskip}
\titlespacing*{\subsubsection}{0pt}{0.4\baselineskip}{0.2\baselineskip}

\usepackage{comment}

\newtheorem{theorem}{Theorem}
\newtheorem{lemma}{Lemma}
\newtheorem{corollary}{Corollary}

\newtheorem{remark}{Remark}
\usepackage{stfloats}
\usepackage{url}
\usepackage{verbatim}
\usepackage{graphicx}
\usepackage{cite}
\usepackage{algpseudocode}
\begin{document}

\title{Characterizing Information Accuracy in Timeliness-Based Gossip Networks}

\author{%
\IEEEauthorblockN{%
Emirhan Tekez\orcidlink{0009-0001-6615-3240}, \IEEEmembership{Graduate Student Member,~IEEE,}
Melih Bastopcu\orcidlink{0000-0001-5122-0642}, \IEEEmembership{Member,~IEEE,}\\
Sinan Gezici\orcidlink{0000-0002-6369-3081}, \IEEEmembership{Fellow,~IEEE}%
}%
\thanks{The authors are with the Department of Electrical and Electronics Engineering, Bilkent University, Ankara, Türkiye (e-mails: emirhan.tekez@bilkent.edu.tr; bastopcu@bilkent.edu.tr; gezici@ee.bilkent.edu.tr).}%

\IEEEcompsocitemizethanks{ 
This work was supported by the TUBITAK 2232-B program (Project No:124C533) and the TUBITAK 2210-A
program.}
}
%

\maketitle

\begin{abstract}
We investigate information accuracy in timeliness-based gossip networks where the source evolves according to a continuous-time Markov chain (CTMC) with $M$ states and disseminates status updates to a network of $n$ nodes. In addition to direct source updates, nodes exchange their locally stored packets via gossip and accept incoming packets solely based on whether the incoming packet is fresher than their local copy. As a result, a node can possess the freshest packet in the network while still not having the current source state. To quantify the amount of accurate information flowing in the network under such a gossiping scheme, we introduce two accuracy metrics, \emph{average accuracy}, defined as the expected fraction of nodes carrying accurate information in any given subset, and \emph{freshness-based accuracy}, defined as the accuracy of the freshest node in any given subset. Using a stochastic hybrid systems (SHS) framework, we first derive steady-state balance equations and obtain matrix-valued recursions that characterize these metrics in fully connected gossip networks under binary CTMCs. We then extend our analysis to the general multi-state information source using a joint CTMC approach. Finally, we quantify the fraction of nodes whose information is accurate due to direct source pushes versus gossip exchanges.  We verify our findings with numerical analyses and provide asymptotic insights.
\end{abstract}

\begin{IEEEkeywords}
information accuracy, gossip networks, misinformation dissemination, timeliness--based gossip
\end{IEEEkeywords}
\vspace{0.1cm}
\section{Introduction}\label{sec:Intro}

Modern networked systems, such as autonomous vehicle fleets \cite{v2x_paper, Kaul2011}, wireless sensor networks \cite{Akkaya2005, Villalba2009, He2020}, and smart factories,  increasingly rely on rapidly-changing information sources. In such settings, 
quick delivery of updates is crucial, which stimulated the rapid growth of the literature on real-time status updates and the \emph{age of information} (AoI) metric \cite{Kaul2011, Yates2019,Yates2020,Sun2022}. AoI quantifies how \emph{old} the latest received update at the receiver is, but it does not directly capture whether the received content is \emph{accurate}. More specifically, for Markov sources, freshness and accuracy can be separate; a newly received update can become incorrect immediately after a source state transition, while an older update may be true if the source's continuous-time Markov chain (CTMC) has reverted back to the relevant state. This has led to accuracy-aware metrics and formulations, including version-based views of Markov source monitoring \cite{Salimnejad2025} and the \emph{age of incorrect information} (AoII) metric \cite{Maatouk2020}, along with policy design under AoII-type objectives \cite{Cosandal2024,Cosandal2025}.

Additionally, gossiping is a well-studied protocol for peer-to-peer exchanges without centralized coordination \cite{Boyd2005,Shah2008,Deb2006,Sanghavi2007}. In the AoI context, \emph{timely gossip} has been studied across different topologies and regimes \cite{Yates2021_SPAWC,  Abd-Elmagid2023,Srivastava2023_grid,Kaswan2023b,Yates_Age_of_Gossip, Kaswan_timestomp,Buyukates2022, Kaswan_mutations, Mitra2023, bastopcu_binary_fresh, cached_harvest}. Here, it refers to a timestamp-based update rule in which a node accepts an incoming packet only if it is fresher than its local copy. Most related works to the problem studied in this manuscript are the recent results on misinformation in gossip networks in \cite{Kaswan_mutations} and \cite{maranzatto_gossip}. The authors of \cite{Kaswan_mutations} study how dishonest nodes and transmission errors affect the propagation of truth under gossiping. In their model, truth serves as a proxy for information accuracy but does not explicitly track node contents, which we do in this manuscript. On the other hand, \cite{maranzatto_gossip} focuses on how misinformation spreads in gossip networks if the transmitted packets' error counts evolve according to a Markov chain as it traverses the network. However, these works do not consider the cases where a source state transition can yield accurate information, even for stale nodes, such as when the source evolves according to a CTMC. 

Our work also connects to a growing line of semantics-aware remote estimation and monitoring for Markov sources, where the \emph{value} of an update depends on what the source state means and what kind of error is being made, rather than on timeliness alone \cite{Luo2024,Luo2025,luo2024exploitingdatasignificanceremote,Saurav2025}. In that literature, the communication decision is typically posed as a possibly state-dependent sampling/query control problem over a direct source–monitor link where one trades off update cost against the risk of incorrect inferences about the Markov state \cite{Luo2024, Luo2025, luo2024exploitingdatasignificanceremote, Saurav2025}. However, such works differ from ours in that gossip networks involve extensive node-to-node packet exchanges and decisions based on local freshness comparisons rather than solely being based on direct source observations. 

Motivated by these considerations, we investigate \emph{information accuracy in timeliness-based gossip networks} driven by an $M$-state CTMC source. We focus on the widely-studied fully-connected (FC) gossip networks \cite{Yates_Age_of_Gossip,Buyukates2022,Kaswan_mutations,Kaswan_survey, Kaswan_timestomp, bastopcu_binary_fresh}. The source pushes its current state to nodes, while nodes gossip their stored packets and keep an incoming packet only if it is \emph{fresher} than the local copy, as in version-based dissemination \cite{Abolhassani2021,Buyukates2022,Salimnejad2025}. Since packet acceptance at a node is based solely on timeliness, a node may hold the freshest packet available in the network while still being incorrect with respect to the current source state. We quantify this effect via two metrics defined for arbitrary $k$-node subsets: $(i)$ \emph{average accuracy}, the expected fraction of nodes holding accurate information in a subset, and $(ii)$ \emph{freshness-based accuracy}, the accuracy of the \emph{freshest} node in a subset.

Our contributions can be listed as follows: $(i)$ We perform a quantitative analysis of how accurate and inaccurate information spreads in timeliness-based gossip networks using the stochastic hybrid systems (SHS) approach \cite{Hespanha2007, yates_aoi_moments}. $(ii)$ We develop matrix-valued recursive equations that quantify the fraction of nodes holding accurate information within the network under binary Markov sources. $(iii)$ We extend our results to a finite $M$-state CTMC using a joint CTMC approach. $(iv)$ We quantify the fraction of nodes that are accurate due to \emph{source pushes} and the fraction of nodes that are accurate due to \emph{source inversions}. Notably, this is the first work to fully characterize information accuracy in timeliness-based gossip networks while explicitly accounting for the nodes' information contents, without using proxies to simplify analysis. 
$(v)$ We provide insights into how information accuracy in gossip networks behaves under asymptotic regimes. We show that in timeliness-based gossip networks, increasing the peer-to-peer gossip rate is far less effective at improving accuracy than increasing the source push rate. $(vi)$ Additionally, we prove that the number of nodes holding any given information in the network is only dependent on the stationary distribution of the source CTMC, and not on any transmission rates. In addition, we support our findings via simulation results for various source push and gossip rates.

\vspace{-0.1cm}

\section{System Model and Problem Formulation}\label{sect:model}

We consider a fully-connected (FC) gossip network with $n$ nodes,
$\mathcal{N} = \{1,\ldots,n\}$, and a source that disseminates its most recent status updates. The information at the source evolves according to an irreducible, finite-state continuous-time
Markov chain (CTMC) having a total of $M$ states, that is,
\( Q(t) \in \{1,\ldots,M\}\), and updates itself according to a known generator matrix, $\mathbf{Q}$, with entries labeled $q_{ij}$ to denote transitions from state $i$ to state $j$. The source sends updates (the current state of the CTMC) to each node in the network, $i \in \mathcal{N}$, according to a thinned Poisson process with rate $\frac{\lambda_s}{n}$. We assume the source to be always holding fresh and true information.  A simple example of the system of interest is illustrated in Fig.~\ref{fig:system_model} for $n=6$ nodes.

\begin{figure}[t]
    \centering
    \includegraphics[width=0.83\columnwidth]{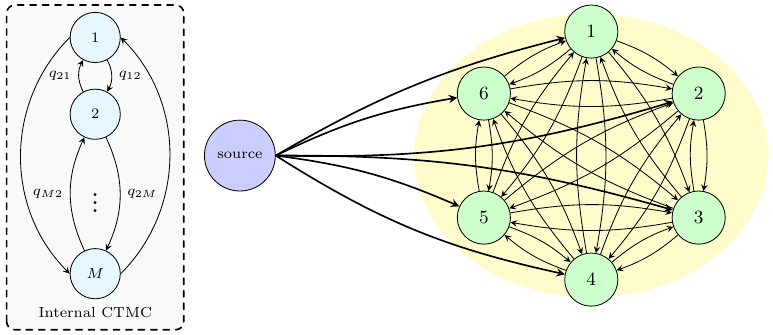}
    \vspace{-0.4cm}
    \caption{A source with $M$-state CTMC connected to a fully-connected gossip network of $n=6$ nodes.}
    \label{fig:system_model}
    \vspace{-0.65cm}
\end{figure}

When the state of the CTMC changes, a new version of the update is generated at the source. We denote the most recent version of the information at the source at time $t$ as $V_0(t)$. Thus, when a new version of the update is generated, the version of the information at the source, $V_0(t)$, increases by one. Similarly, we denote the most recent version of the update at node $i$ at time $t$ as $V_i(t)$ and its content as $S_i(t)$. When the source sends its information to node $i$ at time $t$, node $i$ receives both the current state of the CTMC and also its version information in the form of $\{V_0(t), Q(t)\}$. The nodes would like to keep track of the current state of the CTMC as \textit{accurately} as possible. For that, the nodes always accept updates from the source, as the source maintains the most recent and accurate information. Thus, if node $i$ receives an update from the source at time $t$, the node updates its information as $V_i(t) = V_0(t)$ and $S_i(t) = Q(t)$. In order to measure information freshness at the nodes, we use a metric called version age of information (VAoI) introduced in \cite{Yates_Age_of_Gossip,Abolhassani2021}. The version age of node $i$ at time $t$ is denoted by $X_i(t)$ and is given by
\begin{align}\nonumber\\[-21pt]
    X_i(t) = V_0(t)-V_i(t).\\[-21pt]\nonumber
\end{align}
Since the source has a finite update rate $\lambda_s$ that is equally shared among the nodes, increasing the network size may render the source updates insufficient to maintain accurate information at all nodes. In order to improve their information freshness, the nodes also share the most recent information through the gossiping protocol \cite{Yates2021_SPAWC, Yates_Age_of_Gossip}. For that, each node in the network sends updates to one another according to another thinned Poisson process, where node-to-node gossip takes place with rate $\frac{\lambda}{n-1}$. When two nodes gossip with one another, they only consider the version age of the incoming packet as they do not know whether the incoming packet is accurate or not. More formally, when node $i$ tries to send a packet to node $j$, the version age at node $j$ evolves according to the following rule:
\begin{align}\nonumber\\[-21pt]
X_j(t) = \min\{X_i(t^-), X_j(t^-) \},
\label{eq:x_j_piecewise}\\[-21pt]\nonumber
\end{align}
where $t^-$ represents the time instant just prior to the information exchange. Thus, when node $i$ sends its information to node $j$, node $j$ checks the incoming information version and if the incoming information is more recent, that is, $V_i(t)>V_j(t)$, node $j$ accepts node $i$'s information; otherwise, it continues to keep its current information. As a result, the version age at node $j$ is given in (\ref{eq:x_j_piecewise}). In order to analyze the average version age, we need to keep track of the version age of an arbitrary subset of $k\!$ elements denoted by $\!X_{A_k}\!\!$ which is given by
\begin{equation}
    X_{A_k}(t) \triangleq \min_{j \in A_k} X_j(t).
    \label{eq:version_age_subset}
\end{equation}
In other words, the version age of a set is determined by the freshest information in set $A_k$. 

In this work, our goal is to characterize the average information accuracy of the nodes in the network. With this goal, we define the accuracy of the node $i$ denoted by $C_i(t)$ as:
\begin{align} \nonumber\\[-21pt]
C_i(t) =\begin{cases} 
      1, & S_i(t) = Q(t), \\
      0, & S_i(t) \neq Q(t). 
   \end{cases}
\label{eq:c_j_indicator}\\[-21pt]\nonumber
\end{align}
With the definition in (\ref{eq:c_j_indicator}), accuracy is a binary metric that takes the value $1$ when the information at node $i$ is the same as the information at the source, and $0$ otherwise. This notion of accuracy was investigated under the name \emph{binary freshness} in \cite{nail_hoca_freshness, nail_hoca_20_ocak}, though these works deal with a single source monitor, and do not consider how the metric evolves for gossip networks that include vast amounts of peer-to-peer exchanges. Similarly, we define $C_{A_k}(t)$ as the average accuracy of an arbitrary subset of the network with $k$ nodes, denoted by $A_k$, where $C_{A_k}(t)$ is given by
\begin{align}\nonumber\\[-21pt]
C_{A_k}(t) = \frac{C_1(t) + C_2(t)+\cdots+C_k(t)}{k} = \frac{1}{k} \sum_{\ell \in A_k} C_\ell.
\label{eq:c_ak_defn}\\[-21pt]\nonumber
\end{align}

Note that the accuracy definition depends only on the contents of the source and the relevant node, and does not depend on the version age of the node. Thus, in general, the nodes can have the accurate information in two ways: $i)$ a node obtains the most recent information at the source and as a result, its version age is equal to 0 and has the accurate information, and $ii)$ a node may have an older version of the information, but this stale packet can be accurate if the source returns back to the relevant state. In other words, a stale node carrying an older version of the information may have the correct information which differentiates information timeliness from the information accuracy. We also note that the information accuracy metric that we defined in~\eqref{eq:c_j_indicator} is a special case of the more general metric called AoII defined in \cite{Maatouk2020}. Finally, during gossiping between nodes, as we mentioned earlier, the nodes do not know whether they hold the accurate information or not. Because of this reason, information exchanges still happen based on the information freshness as described in (\ref{eq:x_j_piecewise}). Thus, when node $i$ sends its status update to node $j$, the information accuracy at node $j$ changes as follows:   
\begin{align}
C_j(t) =
\begin{cases}
    C_i(t^-), & X_i(t^-) \le X_j(t^-), \\
    C_j(t^-), & X_i(t^-) >  X_j(t^-).
\end{cases}
\label{eq:c_j_piecewise}\\[-21pt]\nonumber
\end{align}
In other words, if node $i$ carries fresher information, then node $j$'s accuracy is updated to match that of node $i$; otherwise, it remains unchanged.

In the following sections, we analyze the information accuracy of the nodes in the network for a binary information source (i.e., $M=2$), as well as for a source with a general number of states (i.e., $M>2$).

\section{The Binary Information Source}\label{sect:binary}

We first study the average accuracy under a binary CTMC source model (i.e., $M=2$) with a generator matrix given by
\begin{align}\nonumber\\[-21pt]\label{eq:q_binary_structure}
    \mathbf{Q}=\begin{bmatrix} q_{11} & q_{12} \\[2pt] q_{21} & q_{22} \end{bmatrix},\\[-21pt]\nonumber
\end{align}
where $q_{11}=-q_{12}$ and $q_{22}=-q_{21}$. In other words, the source transitions from state $1$ to $2$ at rate $q_{12}$, and from state
$2$ to $1$ at rate $q_{21}$. The stationary distribution of the binary CTMC is
\begin{align} \nonumber\\[-21pt]
    \pi_1=\frac{q_{21}}{q_{12}+q_{21}}, 
    \qquad
    \pi_2=\frac{q_{12}}{q_{12}+q_{21}}\cdot
\label{eq:binary_stationary_pi}\\[-21pt]\nonumber
\end{align}

In order to evaluate the expected number of nodes that hold correct information, \(c_k = \lim_{t \to \infty} \mathbb{E}[C_{A_k}(t)],\)
we model the system using the stochastic hybrid system (SHS) framework \cite{Hespanha2007, yates_aoi_moments}. The hybrid state is given by
\begin{align}\nonumber\\[-21pt]
  \bigl(Q(t),\mathbf{Z}(t)\bigr)
=
\bigl(Q(t),\mathbf{C}(t),\mathbf{X}(t)\bigr),  
\label{eq:hybrid_state}\\[-21pt]\nonumber
\end{align}
where the discrete mode $Q(t)\in\{1,2\}$ denotes the source's CTMC state, and the
continuous component consists of 
$\mathbf{C}(t) = [C_1(t), \ldots, C_n(t)] \in \{0,1\}^n \subset \mathbb{R}^n$ where $C_i(t)$ represents the accuracy of node $i$ at time $t$, and $\mathbf{X}(t) = [X_1(t), \ldots, X_n(t)] \in \mathbb{R}^n,$ (denoting the age vector of all the nodes in the network) collected as $\mathbf{Z}(t) = (\mathbf{C}(t),\mathbf{X}(t))\in \mathbb{R}^{2n}$.
Between jumps, the continuous state is constant, i.e., it obeys the differential equation
$(\dot{\mathbf{C}}(t),\,\dot{\mathbf{X}}(t)) = \mathbf{0}_{2n}$.

The set of possible transitions for this SHS is given by
\begin{align}\nonumber\\[-21pt]\label{eq:set_of_transitions}
\mathcal{E}
&= \{(0,0,1\!\to\!2),(0,0,2\!\to\!1)\}
\;\cup\; \{(0,j):\, j\in\mathcal{N}\}\\
&\quad\cup\; \{(i,j):\, i,j\in\mathcal{N},\, i\neq j\}.\nonumber\\[-21pt]\nonumber
\end{align}
where node $0$ represents the source. Transitions $(0,0,1\to 2)$ and $(0,0,2\to 1)$ represent source flips from state
$1$ to $2$ and from state $2$ to $1$, respectively. The transition $(0,j)$
represents the case where the source sends an update to node $j$, and transition $(i,j)$ represents a gossiping event from node $i$ to node $j$. Note that in this work, we assume that gossip events are directed; for each ordered pair $(i,j)$ with $i\neq j$, node $i$ sends a packet to node $j$, and only node $j$'s accuracy and version-age can change under the described gossip event. Each transition $e\in\mathcal{E}$ resets the hybrid state at time $t$ according to
\begin{align}\nonumber\\[-20pt]
\bigl(Q(t^+),\mathbf{Z}(t^+)\bigr)
=
\phi_e\bigl(Q(t^-),\mathbf{Z}(t^-)\bigr),
\label{eq:q_z_reset}\\[-20pt]\nonumber
\end{align}
where, for brevity, we only present the $\mathbf{Z}$-component in \eqref{eq:X_update_nodewise}--\eqref{eq:F_Ak_update} below. For the transitions specified in~\eqref{eq:set_of_transitions}, the mode-dependent rate map $\lambda_e(q)$ is given by
\begin{align}\nonumber\\[-21pt]
\lambda_{e}(q) =
\begin{cases}
q_{12}\,\mathds{1}_{\{q=1\}}, & e = (0,0,1\to 2), \\
q_{21}\,\mathds{1}_{\{q=2\}}, & e = (0,0,2\to 1), \\
\dfrac{\lambda_s}{n},      & e = (0,j),\; j\in\mathcal{N}, \\
\dfrac{\lambda}{n-1},      & e = (i,j),\; i,j\in\mathcal{N},\; i\neq j,
\end{cases}
\label{eq:lambda_map}\\[-21pt]\nonumber
\end{align}
where $\mathds{1}_{\{x\}}$ is the indicator function taking value $1$ only when $x$ is true and value 0, otherwise. We note from (\ref{eq:lambda_map}) that the  transition $(0,0,1\to 2)$ is active only in mode $Q=1$ with rate $q_{12}$,
and transition $(0,0,2\to 1)$ is active only in mode $Q=2$ with rate $q_{21}$, while push
and gossip rates do not depend on the source mode.

For each node $\ell \in \mathcal{N}$, the state $(\mathbf{C}(t),\mathbf{X}(t))$ evolves according to the following reset maps under the transitions specified in~\eqref{eq:set_of_transitions}. First, we provide the version age reset map for a single node $\ell \in \mathcal{N}$ as follows:
\begin{align} \nonumber\\[-21pt]
\!\!\!\!X_\ell' \!\!=\!\!
\begin{cases}
X_\ell + 1,
& e \!\in\! \{(0,0,1\!\to\! 2), (0,0,2\!\to\! 1)\}, \ell \!\in \!\mathcal{N},\!\!\!\!\!\!\!\!\!\!\!\!\!
\\
0,
& e = (0,j),\; \ell = j \in \mathcal{N},
\\
\min\bigl(X_i, X_j\bigr),
& e = (i,j),\; i,j \in \mathcal{N},\; i\neq j,\; \ell = j,\!\!\!\!\!
\\
X_\ell,
& \text{otherwise.}
\end{cases}
\label{eq:X_update_nodewise}\\[-21pt]\nonumber
\end{align}
For any source inversion event, the version age of information increases by $1$ for each node since the source generates a fresh packet at each state transition. Under gossiping events, the receiver takes on the sender's version age only if the incoming packet is fresher as shown in~\eqref{eq:x_j_piecewise}.

Similarly, the accuracy indicator for node $\ell \!\in\! \mathcal{N}$ evolves as:
\begin{align} \nonumber\\[-21pt]
\!\!\!\!C'_\ell \!\!=\!\!
\begin{cases}
1 - C_\ell,
& \!\!\!e \!\in\! \{(0,0,1\!\to\! 2),\! (0,0,2\!\to\! 1)\}, \ell \!\in\! \mathcal{N},\!\!\!\!\!\!\!\! \!\!\!\!\!\!\!\!\\
1,
& \!\!\!e = (0,j),\; \ell = j \in \mathcal{N}, \\
C_{\operatorname*{arg\,min}_{\ell \!\in\! \{i,j\}} \!X_\ell},
& \!\!\!e = (i,j),\; i,j \in \mathcal{N},\; i \neq j,\; \ell\! \!=\!\! j,\!\!\!\!\!\!\!\!\!\!\! \\
C_\ell,
& \!\!\!\text{otherwise}.
\end{cases}
\label{eq:C_update_nodewise}\\[-21pt]\nonumber
\end{align}
For every source inversion, the accuracy of any node is inverted as provided in the first case of  \eqref{eq:C_update_nodewise}. Since we assume that the source always holds true and fresh information, $X_\ell$ and $C_\ell$ are set to $0$ and $1$, respectively, after receiving information from the source. The third case in~\eqref{eq:C_update_nodewise} corresponds to a case where node $i$ sends an update to node $j$. As we mentioned earlier, since the nodes are unaware of the accuracy of their information, node $j$ decides whether to accept or reject node $i$'s update based on the information's freshness. As a result of the third case in~\eqref{eq:C_update_nodewise}, the accuracy of node $j$ becomes the accuracy of the freshest information in the set containing nodes $i$ and $j$.     

We can extend the node-wise reset maps in~\eqref{eq:X_update_nodewise} and~\eqref{eq:C_update_nodewise} to arbitrary subsets of the network containing $k$ nodes to investigate the $C_{A_k}$ and $X_{A_k}$ quantities described in Section~\ref{sect:model}. First, we provide version age reset maps for an arbitrary set $A_k$ as follows:
\begin{align}\nonumber\\[-21pt]
\!\!X'_{A_k} \!\!=\!\!
\begin{cases}
X_{A_k} + 1,
& e \in \{(0,0,1\to 2),\, (0,0,2\to 1)\}, \\
0,
& e = (0,j),\; j \in A_k, \\
X_{A_{k+1}},
& e = (i,j),\; i \in \mathcal{N}\setminus A_k,\; j \in A_k, \\
X_{A_k},
& \text{otherwise.}
\end{cases}
\label{eq:X_Ak_update}\\[-21pt]\nonumber
\end{align}
We note that the version age of set $A_k$, i.e., $X_{A_k}(t)$, is defined in (\ref{eq:version_age_subset}). When the information at the source is updated, the version age of the freshest node within set $A_k$, thus the version age of set $A_k$, increases by one which is the first case provided in (\ref{eq:X_Ak_update}). Whenever the source sends an update to a node within set $A_k$, the recipient node $j$ becomes the freshest node within set $A_k$, resetting the version age of the set to 0. If node $i$ outside of set $A_k$ sends an update to a node within set $A_k$, the receiver node will copy the version-age of the outsider node if the incoming packet is fresher. Thus, we need to investigate the minimum age of the super-sized set $A_k \; \cup \; \{i\}$ to determine the post-transition value. Since the network is fully connected, we can say $X_{A_k \; \cup \; \{i\}} = X_{A_{k+1}}$ by symmetry. Similarly, we provide the reset map for the accuracy of an arbitrary set $A_k$ as follows:   
\begin{align}\nonumber\\[-21pt]
\!\!\!\!C'_{A_k}\!\! =\!\!
\begin{cases}
1 - C_{A_k},
&\!\! e \in \{(0,0,1\!\to\! 2),\! (0,0,2\!\to\! 1)\},\!\!\!\!\!\!\!\!\!\! \\
C_{A_k} + \dfrac{1 - C_j}{k},
& \!\!e = (0,j),\; j \in A_k, \\
C_{A_k} \!+\! \dfrac{F_{\{i,j\}} \!- \!C_j}{k},
& \!\!e \!=\! (i,j), i,j \!\in \!\mathcal{N},\; i \!\neq \!j,\; j \!\in\! A_k, \!\!\!\!\!\!\!\!\!\!\!\!\!\!\!\!\!\!\!\!\\
C_{A_k}.
&\!\! \text{otherwise.}
\end{cases}
\label{eq:C_Ak_update}\\[-21pt]\nonumber
\end{align}
Whenever the information at the source changes, the accuracy of set $A_k$ flips as provided in the first case of (\ref{eq:C_Ak_update}). When the source sends an update to node $j$ in set $A_k$, the accuracy increases by $\frac{1}{k}$ if node $j$'s accuracy is 0. Similarly, if node $i$ that is not in set $A_k$ sends an update to node $j$ that is in set $A_k$, node $j$ accepts the update if node $i$ carries a fresher update. In this case, the accuracy of the set $A_k$ $i)$ increases by $\dfrac{1}{k}$ when node $i$ carries accurate information and node $j$ has inaccurate information; $ii)$ stays the same if both nodes' accuracy is the same;  $iii)$ decreases by $\dfrac{1}{k}$ if node $i$ has inaccurate information and node $j$ has accurate information. Such an operation in the third case requires us to define a new function denoted by $F_{A_k}$ which is the accuracy of the freshest node in an arbitrary set $A_k$ given by 
\begin{align}\nonumber\\[-21pt]
  F_{A_k}(\mathbf{C},\mathbf{X}) \triangleq
C_{\arg\min_{\ell\in A_k} X_\ell}.  
\label{eq:f_ak_defn}\\[-21pt]\nonumber
\end{align}
This corresponds to a test function of the form
\begin{align}\nonumber\\[-21pt]
  \psi_{k}(Q,\mathbf{Z}) = F_{A_k}(\mathbf{C},\mathbf{X}),
\label{eq:formal_fk_defn}\\[-21pt]\nonumber
\end{align}
which does not depend explicitly on the discrete mode $Q$ and is constant
between jumps. We next define the reset map which governs how $F_{A_k}$ behaves under the transitions in~\eqref{eq:set_of_transitions}, i.e., $F'_{A_k} = \psi_{k}\bigl(\phi_e(Q,\mathbf{Z})\bigr)$:
\begin{align}\nonumber\\[-21pt]
\!\!F'_{A_k}\! \!=\!\!
\begin{cases}
1 - F_{A_k},
& e \in \{(0,0,1\to 2),\, (0,0,2\to 1)\}, \\
1,
& e = (0,j),\; j \in A_k, \\
F_{A_{k+1}},
& e = (i,j),\; i \in \mathcal{N}\setminus A_k,\; j \in A_k, \\
F_{A_k},
& \text{otherwise.}
\end{cases}
\label{eq:F_Ak_update}\\[-21pt]\nonumber
\end{align}
When information at the source is updated, $F_{A_k}$ flips, similar to the first case defined in~\eqref{eq:C_Ak_update}. When the source sends an update to a node in set $A_k$, the receiver node becomes the freshest node within the set and inherits the source's accuracy, hence $F_{A_k}'=1$. During gossip events where a node outside $A_k$ gossips to one inside, the new freshness-based accuracy of $A_k$ coincides with that of the expanded set $A_k\cup\{i\}$ and, by symmetry, equals $F_{A_{k+1}}$.

We recall the steady-state SHS balance identity. For any (time-invariant) test
function $\psi$ of the hybrid state, we have
\begin{subequations}\label{eq:shs_generator_balance}
\begin{align}\nonumber\\[-21pt]\label{eq:shs_stationarity}
\frac{d}{dt}\,\mathbb{E}\!\left[\psi\bigl(Q(t),\mathbf{Z}(t)\bigr)\right] = 0 .\\[-21pt]\nonumber
\end{align}
since the test function’s expected value does not change with time under steady state conditions. The extended generator $\mathcal{L}\psi(Q,\mathbf{Z})$ gives the instantaneous change of
$\psi(Q(t),\mathbf{Z}(t))$ by summing the jump contributions over all transition types $e\!\in\!\mathcal{E}$:
\begin{align}\nonumber\\[-21pt]\label{eq:shs_generator_def}
\mathcal{L}\psi(Q,\mathbf{Z})
=
\sum_{e\in\mathcal{E}}
\lambda_e(Q)
\Bigl[
\psi\bigl(\phi_e(Q,\mathbf{Z})\bigr)
-
\psi(Q,\mathbf{Z})
\Bigr] .\\[-21pt]\nonumber
\end{align}
Using the fact that the expected change in steady-state is zero, we obtain
\begin{align}\label{eq:shs_balance_equation}\nonumber\\[-21pt]
\!\!\!\sum_{e\in\mathcal{E}}
\!\mathbb{E}\!\left[\!
\lambda_e\bigl(Q(t)\bigr)
\!\Bigl(\!
\!\psi\bigl(\!\phi_e(Q(t),\!\mathbf{Z}(t))\bigr)
\!-\!
\psi\bigl(Q(t),\!\mathbf{Z}(t)\!\bigr)
\!\Bigr)
\!\right]\!=\!0 .\!\!
\end{align} \\[-21pt]\nonumber
\end{subequations}

For the particular choices of $\psi(\cdot)$ introduced below (e.g., $F_{A_k}$ and $C_{A_k}$'s mode-tagged variants), \eqref{eq:shs_balance_equation} yields the SHS balance
equations.

\subsection{Mode-Tagged Freshness Accuracy Characterization}
Since the rates of source push events depend on which discrete mode the system was in before transition $e$, we first derive an exact characterization of the freshness-based accuracy $F_{A_k}$ that explicitly tracks the source mode $Q(t)$. For $m\!\in\!\{1,2\}$ and $k\!\!\in\!\!\{1,\ldots,n\}$, we define the mode-tagged expectations as
\begin{align}\nonumber\\[-21pt]
f_k^{(m)}
\;\triangleq\;
\lim_{t \to \infty}\mathbb{E}\!\left[F_{A_k}(t)\,\mathds{1}_{\{Q(t)=m\}}\right],
\label{eq:fkm_def}\\[-21pt]\nonumber
\end{align}
and also $\mathbf{f}_k \triangleq  [f_k^{(1)}, f_k^{(2)}]^T$ where $[\cdot]^T$ denotes the transpose of a vector. The sum of these two quantities yield the unconditional freshness-based accuracy given by:
\begin{align}\nonumber\\[-21pt]
f_k \triangleq f_k^{(1)} + f_k^{(2)}
= \lim_{t\to\infty} \mathbb{E}[F_{A_k}(t)].
\label{eq:fk_uncond}\\[-21pt]\nonumber
\end{align}
To obtain the balance equations, we use the test functions
\begin{align}\nonumber\\[-21pt]
\psi_k^{(1)}(Q,\mathbf{Z})
&= F_{A_k}(\mathbf{C},\mathbf{X})\,\mathds{1}_{\{Q=1\}},\label{eq:psi_fk_modes}
\\[-3pt]
\psi_k^{(2)}(Q,\mathbf{Z})
&= F_{A_k}(\mathbf{C},\mathbf{X})\,\mathds{1}_{\{Q=2\}}.\\[-21pt]\nonumber
\end{align}
where $F_{A_k}(\mathbf{C},\mathbf{X})$ is the accuracy of the freshest node in set $A_k$ defined in (\ref{eq:f_ak_defn}) and we have $f_k^{(m)} = \lim_{t \to \infty} \mathbb{E}[\psi_k^{(m)}]$. For notational convenience, we define the rates
\begin{align}
\alpha_k \triangleq \frac{k\lambda_s}{n},
\quad
\beta_k \triangleq \frac{k(n-k)\lambda}{n-1},
\quad
\gamma_k \triangleq \frac{k(k-1)\lambda}{n-1},
\label{eq:alpha_beta_def} \\[-21pt]\nonumber
\end{align}
which are, respectively, the total source push rate to the nodes in set $A_k$, the
rate of gossip events from nodes outside $A_k$ into nodes inside $A_k$, and the rate of gossip events from and to nodes inside $A_k$. We note that $\alpha_k$, $\beta_k$, and $\gamma_k$ do not depend on the discrete mode $Q$.

Let us consider the test function $\psi_k^{(1)}(Q,\mathbf{Z})= F_{A_k}\mathds{1}_{\{Q(t)=1\}}$ given in (\ref{eq:psi_fk_modes}). There are three types of events that can induce a change: $a)$ source inversions, $b)$ source pushes, and $c)$ gossiping events. As a result, the SHS balance equation for $\psi_k^{(1)}(Q,\mathbf{Z})$ can be written as: 
\begin{align}\label{eq:SHS_F_A_k}
\lim_{t\to\infty}\!\mathbb{E}\Big[\!
&\Big(\!\!-\!\!q_{12}F_{A_k}\!\!
+\!\alpha_k(1\!-\!F_{A_k})
\!\!+\!\!\beta_k(F_{A_{k+1}}\!\!-\!F_{A_k})\!\Big)\notag\\[-0.2em]
&\cdot \mathds{1}_{\{Q=1\}}
+\Big(\!q_{21}(1\!-\!F_{A_k})\!\Big)\mathds{1}_{\{Q=2\}}
\!\Big]=0.
\end{align}
The first term in (\ref{eq:SHS_F_A_k}) comes from the source inversion transition $(0,0,1\to2)$ which happens with rate $q_{12}\mathds{1}_{\{Q(t)=1\}}$.
After this transition, the source state becomes $Q'(t)=2$. From~\eqref{eq:F_Ak_update}, the
freshness-based accuracy for state~1 becomes $0$ (since the source is now in state~2), hence
the overall change induced by this transition can be written as $(0 - F_{A_k})\mathds{1}_{\{Q=1\}}$. Similarly, the second term in (\ref{eq:SHS_F_A_k}) is obtained for $(0,0,2\to1)$ transition, which occurs with rate
$q_{21}\mathds{1}_{\{Q(t)=2\}}$. In this case, after the transition the new source state becomes
$Q'(t)=1$, and~\eqref{eq:F_Ak_update} implies the freshness-based accuracy for state~1 becomes $1-F_{A_k}$. Thus, we can express the overall change for this transition as $(1 - F_{A_k})\mathds{1}_{\{Q=2\}}$. The third term in (\ref{eq:SHS_F_A_k}) represents source updating a node in $A_k$ in which case  $F_{A_k}$ takes the value $1$. The overall change induced by this transition can be written as $(1 - F_{A_k})\mathds{1}_{\{Q=1\}}$. Finally, the last term in (\ref{eq:SHS_F_A_k}) represents the case where a node outside $A_k$ gossips to a node inside $A_k$ which happens with update rate $\beta_k$. After this update, the accuracy of set $A_k$ becomes the freshness-based accuracy of the super-set $F_{A_{k+1}}$ that includes the sender node. Thus, we can write the change induced by this transition as $(F_{A_{k+1}} - F_{A_k})\mathds{1}_{\{Q=1\}}$. 

After taking the expectations in (\ref{eq:SHS_F_A_k}), we obtain 
\begin{equation}
- q_{12} f_k^{(1)} + q_{21}(\pi_2 - f_k^{(2)})
+ \alpha_k(\pi_1 - f_k^{(1)})
+ \beta_k(f_{k+1}^{(1)} - f_k^{(1)})= 0,
\label{eq:F_balance_mode1}
\end{equation}
for $k=1,\ldots,n-1$. After repeating a similar analysis for $\psi_k^{(2)}(Q,\mathbf{Z})= F_{A_k}\mathds{1}_{\{Q(t)=2\}}$, we can obtain: 
\begin{equation}
q_{12}(\pi_1 - f_k^{(1)})
- q_{21} f_k^{(2)}
+ \alpha_k(\pi_2 - f_k^{(2)})
+ \beta_k(f_{k+1}^{(2)} - f_k^{(2)}) = 0,
\label{eq:F_balance_mode2}
\end{equation}
for $k=1,\ldots,n-1$. Combining (\ref{eq:F_balance_mode1}) and (\ref{eq:F_balance_mode2}), we obtain a linear recursion of the form
\begin{equation}
W_k \mathbf{f}_k = \mathbf{v}_k + V_k \mathbf{f}_{k+1},
\qquad
k=1,\ldots,n-1,
\label{eq:fk_matrix_recursion}
\end{equation}
with
\begin{align}\label{eq:A_k_B_k_def}
\resizebox{\columnwidth}{!}{$
W_k \!\triangleq\!
\begin{bmatrix}
q_{12}\!+\!\alpha_k\!+\!\beta_k & q_{21}\\
q_{12} & q_{21}\!+\!\alpha_k\!+\!\beta_k
\end{bmatrix},
~
V_k\! \triangleq\! \beta_k I_2,
~
\mathbf{v}_k \triangleq
\begin{bmatrix}
q_{21}\pi_2\!+\!\alpha_k\pi_1\\
q_{12}\pi_1\!+\!\alpha_k\pi_2
\end{bmatrix},
$}
\end{align}
where $I_2$ represents a $2\times2$ identity matrix. Since $W_k$ is an invertible matrix for $\alpha_k, \beta_k >0$, we obtain $\mathbf{f}_k$ as
\begin{equation}
\mathbf{f}_k
=
W_k^{-1}\bigl(\mathbf{v}_k + V_k\mathbf{f}_{k+1}\bigr),
\qquad
k=1,\ldots,n-1.
\label{eq:fk_backward_recursion}
\end{equation}
We note that (\ref{eq:fk_backward_recursion}) implies a matrix-valued backward recursion for the mode-tagged freshness-based accuracy.

When $k=n$, the set $A_n$ contains all nodes. Hence,
$\beta_n = 0$ and $\alpha_n = \lambda_s$. The balance equations
\eqref{eq:F_balance_mode1}–\eqref{eq:F_balance_mode2} reduce to
\begin{align}
q_{21}(\pi_2 - f_n^{(2)})
- q_{12} f_n^{(1)}
+ \lambda_s(\pi_1 - f_n^{(1)}) =& 0,
\label{eq:F_balance_kn_1}
\\[4pt]
q_{12}(\pi_1 - f_n^{(1)})
- q_{21} f_n^{(2)}
+ \lambda_s(\pi_2 - f_n^{(2)}) =&0.
\label{eq:F_balance_kn_2}
\end{align}
Solving the $2\times 2$ system in~\eqref{eq:F_balance_kn_1}–\eqref{eq:F_balance_kn_2} along with $(\pi_1, \pi_2)$ in (\ref{eq:binary_stationary_pi}) yields
\begin{align}\label{eq:boundary_solution}
\!\!\mathbf{f}_n
\!\triangleq\!
\begin{bmatrix}
f_n^{(1)}\\[2pt]
f_n^{(2)}
\end{bmatrix}
\!=\!
\frac{1}{(q_{12} \!+\! q_{21})(\lambda_s\!+\!q_{12} \!+\! q_{21})}
\begin{bmatrix}
q_{21}(\lambda_s\!+\!q_{21})\\
q_{12}(\lambda_s\!+\!q_{12})
\end{bmatrix}\!.\!\!
\end{align}
Given $\mathbf{f}_n$, the recursion in \eqref{eq:fk_backward_recursion} determines $\mathbf{f}_k$ for all $k=n-1,\cdots,1$. Having completed the characterization of freshness-based accuracy, which is required to analyze average accuracy, we now proceed to study the average accuracy of the network. 

\subsection{Characterization of Mode-Tagged Average Accuracy}\label{sec:mode_tagged_acc}

In (\ref{eq:C_Ak_update}), we show how the accuracy of set $A_k$ denoted by $C_{A_k}$ resets based on the possible transitions in the network. We see that $C_{A_k}$ depends on the accuracy of the freshest node in set $A_k$ denoted by $F_{A_k}$. We will now proceed to provide an explicit characterization of the mode-tagged average accuracy to provide a complete analysis, and derive expressions which will be useful for providing insights into how accurate and inaccurate information spreads under asymptotic regimes. 

\begin{theorem}\label{thm:avg_c_theorem}
Let $\mu\triangleq \lambda_s/n$ denote the per-node source push rate and let
$\mathbf{c}\triangleq [c^{(1)},\,c^{(2)}]^T$ denote the vector consisting of the mode-tagged average accuracies.
Then, $\mathbf{c}$ is the unique solution of the linear system
\begin{align}\label{eq:c_linear_system}\nonumber\\[-21pt]
\bar W\,\mathbf{c}=\mathbf{b}+\lambda\,\mathbf{f}_2,\\[-21pt]\nonumber
\end{align}
where
\[
\bar W \triangleq
\begin{bmatrix}
q_{12}+\mu+\lambda & q_{21}\\
q_{12} & q_{21}+\mu+\lambda
\end{bmatrix},~
\mathbf{b}\triangleq
\begin{bmatrix}
q_{21}\pi_2+\mu\pi_1\\
q_{12}\pi_1+\mu\pi_2
\end{bmatrix},
\]
and $ \mathbf{f}_2\triangleq
\begin{bmatrix}
f_2^{(1)},
f_2^{(2)}
\end{bmatrix}^T.$ Consequently, the unconditional average accuracy for any arbitrary subset of the network with cardinality $k \in \{1,2,\cdots,n\}$ is
\[
c_k = c \;=\; \mathbf{1}^T\mathbf{c},\qquad \mathbf{1}=[1\;\;1]^T.
\]

\begin{proof}
 For $m\in\{1,2\}$ and $k\ge 1$, we first define the expected value of accuracy of a set $A_k$ conditioned on the state $Q(t)=m$ as
\begin{equation}
c_k^{(m)}
\;\triangleq\;
\lim_{t \to \infty}\mathbb{E}\!\left[C_{A_k}(t)\,\mathds{1}_{\{Q(t)=m\}}\right].
\label{eq:ck_mode_def}
\end{equation}
Then, we define the overall expected accuracy as 
\begin{align}
    c_k = c_k^{(1)} + c_k^{(2)},
\end{align}
for all $k$. Similar to the analysis for mode-tagged freshness-based accuracy, we use the two following test functions for the average accuracy:
\begin{equation}
\!\!\hat{\psi}_k^{(1)}(Q,\mathbf{Z})
= C_{A_k}\,\mathds{1}_{\{Q=1\}},\!\!
\quad
\hat{\psi}_k^{(2)}(Q,\mathbf{Z})
= C_{A_k}\,\mathds{1}_{\{Q=2\}},\!\!
\label{eq:phi_ck_modes}
\end{equation}
so that we have $c_k^{(m)} = \lim_{t \to \infty} \mathbb{E}[\hat{\psi}_k^{(m)}]$. Then, in a similar fashion to the balance equations for $F_{A_k}^{(1)}$ and $F_{A_k}^{(2)}$, we can write the following balance equations for $\hat{\psi}_k^{(1)}(Q,\mathbf{Z})= C_{A_k}\,\mathds{1}_{\{Q=1\}}$ as:
\begin{align}\label{eq:SHS_C_A_k}
\lim_{t\to\infty}\!\mathbb{E}\Big[\!
&\Big(\!\!-\!\!q_{12}C_{A_k}\!\!
+\!\alpha_k\Big(\tfrac{1\!-\!C_j}{k}\Big)
\!\!+\!\!(\beta_k\!+\!\gamma_k)\!\Big(\tfrac{F_{\{i,j\}}\!-\!C_j}{k}\Big)\!\!\Big)\notag\\[-0.2em]
&\cdot \mathds{1}_{\{Q=1\}}
+\Big(\!q_{21}(1\!-\!C_{A_k})\!\Big)\mathds{1}_{\{Q=2\}}
\!\Big]=0.
\end{align}
After taking the expectations and using the fact that $\frac{d}{dt}\,\mathbb{E}\left[\hat{\psi}\bigl(Q,\mathbf{Z}\bigr)\right] = 0$, we have 
\begin{align}
\!\!\!q_{21}(\pi_2\!-\!c_k^{(2)})\!-\!q_{12}c_k^{(1)}
\!+\!\tfrac{\lambda_s}{n}(\pi_1\!-\!c_1^{(1)})
\!+\!\lambda(f_2^{(1)}\!\!-\!c_1^{(1)})\!=\!0.\!\!
\label{eq:C_balance_mode1_raw}
\end{align}
By following similar steps for $\hat{\psi}_k^{(2)}\!(Q,\mathbf{Z})\!\!=\!\! C_{A_k}\mathds{1}_{\{\!Q\!=\!2\}}$, we get
\begin{align}
\!\!\!q_{12}(\pi_1\!-\!c_k^{(1)})\!-\!q_{21}c_k^{(2)}
\!+\!\tfrac{\lambda_s}{n}(\pi_2\!-\!c_1^{(2)})
\!+\!\lambda(f_2^{(2)}\!\!-\!c_1^{(2)})\!=\!0.\!\!
\label{eq:C_balance_mode2_raw}
\end{align}
We note that $C_{A_k}(t)= \frac{1}{k}\sum_{\ell=1}^{k}C_\ell(t)$ as provided in (\ref{eq:c_ak_defn}). Since the network is fully connected and symmetric, every node in the network experiences the same average accuracy. Because of this reason, all $c_k^{(m)}$ for $k\geq 1$ are equal to each other, that is, $c_k^{(m)} = c^{(m)},$ for all $k \in \{1,\dots,n\},$ and $m\in\{1,2\}$. Then, the accuracy equations in (\ref{eq:C_balance_mode1_raw}) and (\ref{eq:C_balance_mode2_raw}) simplify to
\begin{align}
\!\!\!q_{21}(\pi_2\! -\! c^{(2)})
\!-\! q_{12} c^{(1)}
\!+\! \frac{\lambda_s}{n}\bigl(\pi_1 \!-\! c^{(1)}\bigr)
\!+\! \lambda\bigl(f_2^{(1)} \!\!- \!c^{(1)}\bigr)\! \!=& 0,\!\!\!\label{eq:C_mode_coupled_system1}
\\
\!\!\!q_{12}(\pi_1 \!- \!c^{(1)})
\!- \!q_{21} c^{(2)}
\!+\! \frac{\lambda_s}{n}\bigl(\pi_2 \!- \!c^{(2)}\bigr)
\!+ \!\lambda\bigl(f_2^{(2)} \!\!- \!c^{(2)}\bigr) \!\!=& 0.\!\!\!\label{eq:C_mode_coupled_system2}
\end{align}

Define the mode-tagged average accuracy vector \(\mathbf{c}\triangleq
\begin{bmatrix}
c^{(1)},
c^{(2)}
\end{bmatrix}^T\). Then \eqref{eq:C_mode_coupled_system1}--\eqref{eq:C_mode_coupled_system2} can be written compactly as the form provided in~\eqref{eq:c_linear_system}.

Consequently, the average accuracy of an individual node (equivalently, that of the set $A_k$ for any $k \in \{1,\cdots,n\}$) under a binary CTMC is given by $c \triangleq c^{(1)}+c^{(2)} = \mathbf{1}^T\mathbf{c}$ and by symmetry $c_k=c$ for all $k\ge 1$. This concludes the proof.
\end{proof}
\end{theorem}

Next, we investigate several limits that follow directly from Theorem~\ref{thm:avg_c_theorem}, starting with the case where the source can push new information to the network without any rate constraint. 
\begin{corollary}\label{cor:ls_infty_fixed_lambda}
Fix $\lambda\ge 0$ and let $\mu=\lambda_s/n$. Let $\mathbf{c}=[c^{(1)},c^{(2)}]^T$ be the unique solution of
\eqref{eq:c_linear_system}. Then, as $\lambda_s$ (or equivalently, $\mu$) grows large, we have
\begin{equation}\label{eq:cor_ls_infty}
\lim_{\lambda_s\to\infty}c^{(m)}=\pi_m,\; m\in\{1,2\},
\end{equation}
and as a result, $\lim_{\lambda_s\to\infty}c=\pi_1+\pi_2=1.$ Here, $\boldsymbol{\pi}\triangleq[\pi_1,\pi_2]^T$ is the stationary distribution of the source CTMC.
\end{corollary}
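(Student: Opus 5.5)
The plan is to divide the linear system \eqref{eq:c_linear_system} of Theorem~\ref{thm:avg_c_theorem} by $\mu$ and let $\mu\to\infty$ with $\lambda$, $q_{12}$, $q_{21}$ fixed. Write $\bar W=\mu I_2+R$, where
\[
R\triangleq\begin{bmatrix} q_{12}+\lambda & q_{21}\\ q_{12} & q_{21}+\lambda\end{bmatrix}
\]
does not depend on $\mu$. Similarly, write $\mathbf{b}=\mu\boldsymbol{\pi}+\mathbf{r}$ with $\mathbf{r}\triangleq[q_{21}\pi_2,\;q_{12}\pi_1]^T$, which is also independent of $\mu$. Dividing \eqref{eq:c_linear_system} by $\mu$ gives
\[
\Bigl(I_2+\tfrac{1}{\mu}R\Bigr)\mathbf{c}=\boldsymbol{\pi}+\tfrac{1}{\mu}\bigl(\mathbf{r}+\lambda\mathbf{f}_2\bigr).
\]
For $\mu$ large, $I_2+R/\mu$ is invertible, and its inverse tends to $I_2$ by continuity of matrix inversion. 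The convergence $c^{(m)}\to\pi_m$ will then follow once I show that $\lambda\mathbf{f}_2/\mu\to\mathbf{0}$.

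The only step needing care is that $\mathbf{f}_2$ also depends on $\lambda_s$ through the backward recursion \eqref{eq:fk_backward_recursion}. I will avoid tracking the recursion. Since $F_{A_2}\in\{0,1\}$, each entry satisfies $0\le f_2^{(m)}=\lim_{t\to\infty}\mathbb{E}[F_{A_2}\mathds{1}_{\{Q=m\}}]\le\pi_m\le 1$ for every $\lambda_s$. Hence $\lambda\mathbf{f}_2/\mu$ is bounded entrywise by $\lambda/\mu\to 0$, and this holds for any fixed $\lambda\ge 0$. If $n$ is held fixed, then $\mu=\lambda_s/n\to\infty$ exactly when $\lambda_s\to\infty$, so the two limits in the statement coincide.

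Taking the limit, $\mathbf{c}\to I_2^{-1}\boldsymbol{\pi}=\boldsymbol{\pi}$, which gives \eqref{eq:cor_ls_infty}. Summing the entries, $c=\mathbf{1}^T\mathbf{c}\to\pi_1+\pi_2=1$. The main obstacle is the uniform boundedness of $\mathbf{f}_2$ in $\lambda_s$. If one insists on a purely algebraic argument, it could instead be obtained from the recursion, since $W_k^{-1}$ scales like $1/\alpha_k$. The probabilistic bound is cleaner, so I will use it.
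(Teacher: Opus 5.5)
Your proposal is correct and follows the paper's own proof. Both write $\bar W=\mu I_2+\tilde W$, divide the system of Theorem~\ref{thm:avg_c_theorem} by $\mu$, and let $\mu\to\infty$ so that $\mathbf{b}/\mu\to\boldsymbol{\pi}$ and hence $\mathbf{c}\to\boldsymbol{\pi}$. Your bound $0\le f_2^{(m)}\le\pi_m$, which holds uniformly in $\lambda_s$, makes explicit why $\lambda\mathbf{f}_2/\mu\to\mathbf{0}$, a step the paper uses without stating it.
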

\begin{proof}
From Theorem~\ref{thm:avg_c_theorem}, we have $\bar W\,\mathbf{c}=\mathbf{b}+\lambda\mathbf{f}_2$ with
$\bar W=\mu I_2+\tilde W$ where
$\tilde W\triangleq\begin{bmatrix}q_{12}+\lambda & q_{21}\\ q_{12} & q_{21}+\lambda\end{bmatrix}$.
If we divide both sides by $\mu$, we obtain
\[
\Bigl(I_2+\frac{1}{\mu}\tilde W\Bigr)\mathbf{c}
=
\boldsymbol{\pi}+\frac{\lambda}{\mu}\mathbf{f}_2,
\]
since $\mathbf{b}/\mu$ approaches  $\boldsymbol{\pi}$ as $\mu$ grows arbitrarily large, we have $\lim_{\mu \to \infty} \mathbf{c} = \boldsymbol{\pi}$ and hence $\lim_{\mu \to \infty}  c= 1$.
\end{proof}
Corollary~\ref{cor:ls_infty_fixed_lambda} is a direct result of the assumption that the source always holds fresh and accurate information. As the source push rate rises arbitrarily, this fresh and accurate information dominates all gossip events, forcing all the nodes in the network to be accurate.

In the following corollary, we investigate how the average accuracy of the network behaves if the source stops sending updates to the network.
\begin{corollary}\label{cor:ls_zero_fixed_lambda}
Fix $\lambda>0$ and let $\mu=\lambda_s/n$. Then, as $\lambda_s$ (or equivalently, $\mu$) approaches 0, the mode-tagged accuracy vector $\mathbf{c}$ converges to $\mathbf{c}_0\triangleq\lim_{\lambda_s\to 0}\mathbf{c}$, which is the unique solution of
\begin{equation*}
\begin{bmatrix}
q_{12} \!+ \!\lambda & q_{21}\\
q_{12} & q_{21}\! + \!\lambda
\end{bmatrix}
\mathbf{c}_0
\!=\!
\begin{bmatrix}
q_{21}\pi_2\\
q_{12}\pi_1
\end{bmatrix}
\!+\!
\lambda\,\mathbf{f}_{2,0},
~\text{where }
\mathbf{f}_{2,0}\triangleq \lim_{\lambda_s\to 0}\mathbf{f}_2.
\end{equation*}
Consequently, $\lim_{\lambda_s\to 0}c=\mathbf{1}^T\mathbf{c}_0$.
\end{corollary}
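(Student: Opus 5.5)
The plan is to pass to the limit $\mu\to 0$ directly in the linear system of Theorem~\ref{thm:avg_c_theorem}, $\bar W\mathbf{c}=\mathbf{b}+\lambda\mathbf{f}_2$. As in the proof of Corollary~\ref{cor:ls_infty_fixed_lambda}, write $\bar W=\mu I_2+\tilde W$ with $\tilde W=\begin{bmatrix}q_{12}+\lambda & q_{21}\\ q_{12} & q_{21}+\lambda\end{bmatrix}$, and split $\mathbf{b}=\mathbf{b}_0+\mu\boldsymbol{\pi}$ with $\mathbf{b}_0=[q_{21}\pi_2,\;q_{12}\pi_1]^T$. Both pieces depend affinely, hence continuously, on $\mu$.

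The first step is to show that $\tilde W$ is invertible for every $\lambda>0$. A direct computation gives
\[
\det\tilde W=(q_{12}+\lambda)(q_{21}+\lambda)-q_{12}q_{21}=\lambda(\lambda+q_{12}+q_{21})>0 .
\]
Since the determinant is continuous, $\bar W(\mu)$ is invertible on a neighborhood of $\mu=0$, and $\bar W(\mu)^{-1}\to\tilde W^{-1}$ as $\mu\to0$. This step also gives the uniqueness of $\mathbf{c}_0$.

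The second step, which I expect to be the main obstacle, is to justify that $\mathbf{f}_{2,0}=\lim_{\lambda_s\to0}\mathbf{f}_2$ exists. I would argue from the backward recursion \eqref{eq:fk_backward_recursion}, starting at the boundary value \eqref{eq:boundary_solution}.
\begin{itemize}
\item The boundary vector $\mathbf{f}_n$ is a rational function of $\lambda_s$ whose denominator $(q_{12}+q_{21})(\lambda_s+q_{12}+q_{21})$ stays positive at $\lambda_s=0$. Its limit is $\mathbf{f}_n\to\boldsymbol{\pi}\,\cdot\,q_{21}q_{12}/\ldots$; concretely, $f_n^{(1)}\to q_{21}^2/(q_{12}+q_{21})^2=\pi_1^2$ and $f_n^{(2)}\to\pi_2^2$.
\item For $k\le n-1$, each $W_k$ has $\alpha_k\to0$ and $\beta_k>0$ fixed. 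Its determinant tends to $\beta_k(\beta_k+q_{12}+q_{21})>0$, so $W_k^{-1}$, $\mathbf{v}_k$ and $V_k$ are continuous at $\lambda_s=0$.
\item By induction downward from $k=n$, each $\mathbf{f}_k$ is continuous in $\lambda_s$ at $0$. In particular $\mathbf{f}_{2,0}$ exists; for $n=2$ one uses $\mathbf{f}_n$ directly.
\end{itemize}
Here $\lambda>0$ is exactly what keeps $\beta_k$ bounded away from zero. This is the role of the hypothesis, in contrast with Corollary~\ref{cor:ls_infty_fixed_lambda}.

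Finally, write $\mathbf{c}(\mu)=\bar W(\mu)^{-1}\bigl(\mathbf{b}(\mu)+\lambda\mathbf{f}_2(\mu)\bigr)$. Every factor converges, so
\[
\mathbf{c}\to\mathbf{c}_0=\tilde W^{-1}\bigl(\mathbf{b}_0+\lambda\mathbf{f}_{2,0}\bigr),
\]
which is the stated system. Continuity of the linear map $\mathbf{1}^T$ then gives $\lim_{\lambda_s\to0}c=\mathbf{1}^T\mathbf{c}_0$.
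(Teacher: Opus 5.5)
Your proposal is correct and takes the same route as the paper. You let $\mu\to0$ in the linear system of Theorem~\ref{thm:avg_c_theorem} and obtain $\mathbf{f}_{2,0}$ by setting $\alpha_k=0$ in the backward recursion \eqref{eq:fk_backward_recursion}, and you add the invertibility and continuity facts the paper leaves implicit: $\det\tilde W=\lambda(\lambda+q_{12}+q_{21})>0$, $\det W_k\to\beta_k(\beta_k+q_{12}+q_{21})>0$, and $\mathbf{f}_n\to[\pi_1^2,\pi_2^2]^T$. The one blemish is the garbled intermediate expression ``$\mathbf{f}_n\to\boldsymbol{\pi}\cdot q_{21}q_{12}/\ldots$'', which you should delete in favor of the concrete limits you state right after it.
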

\begin{proof}
Let $\mu$ get arbitrarily close to $0$ in the linear system in \eqref{eq:c_linear_system}. The limit $\mathbf{f}_{2,0}$
is obtained by taking $\alpha_k=\tfrac{k\lambda_s}{n} = 0$ in the recursion that defines $\mathbf{f}_2$, given in \eqref{eq:fk_matrix_recursion}.
\end{proof}
Intuitively, Corollary~\ref{cor:ls_zero_fixed_lambda} can be interpreted as gossip events dominating the information flow in the network when there are no new packets entering the network. Under such conditions, each node's accuracy is determined entirely by the accuracy of the freshest node within the subset of analysis, since that node's content is guaranteed to be accepted by all other nodes if there are no new packets arriving from the source. In this case, even though the source does not transmit fresh or accurate information to the nodes, the nodes may still hold accurate information due to state flips at the source.

We now characterize how the average accuracy behaves when nodes can exchange information without rate limits.
\begin{corollary}\label{cor:lambda_infty_fixed_ls}
Fix $\lambda_s>0$ (equivalently, $\mu>0$). Let $\mathbf{c}$ solve \eqref{eq:c_linear_system}.
Then, as $\lambda$ grows arbitrarily large, we have
\begin{equation*}
\lim_{\lambda\to\infty}\mathbf{c}=\lim_{\lambda\to\infty}\mathbf{f}_2
=
\begin{bmatrix}
f_n^{(1)}\\
f_n^{(2)}
\end{bmatrix},
\qquad
\lim_{\lambda\to\infty}c=\lim_{\lambda\to\infty}\mathbf{1}^T\mathbf{c}=f_n,
\end{equation*}
where $f_n^{(1)}$ and $f_n^{(2)}$ are given in~\eqref{eq:boundary_solution}, and $f_n$ can be obtained from the sum of these two quantities.
\end{corollary}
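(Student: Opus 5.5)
The plan has two parts. First, show that $\mathbf{f}_2 \to \mathbf{f}_n$ as $\lambda\to\infty$. Second, pass to the limit in the linear system \eqref{eq:c_linear_system} of Theorem~\ref{thm:avg_c_theorem}.

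For the first part we use the backward recursion \eqref{eq:fk_matrix_recursion}. For $k\le n-1$ we have $\beta_k=k(n-k)\lambda/(n-1)>0$, which grows linearly in $\lambda$. Dividing $W_k\mathbf{f}_k=\mathbf{v}_k+\beta_k\mathbf{f}_{k+1}$ by $\beta_k$ gives
\begin{equation*}
\Bigl(I_2+\tfrac{1}{\beta_k}\bigl(\alpha_k I_2+\hat Q\bigr)\Bigr)\mathbf{f}_k=\tfrac{1}{\beta_k}\mathbf{v}_k+\mathbf{f}_{k+1},
\qquad
\hat Q\triangleq\begin{bmatrix}q_{12}&q_{21}\\ q_{12}&q_{21}\end{bmatrix}.
\end{equation*}
Here $\alpha_k$, $\hat Q$ and $\mathbf{v}_k$ do not depend on $\lambda$. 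The boundary vector $\mathbf{f}_n$ in \eqref{eq:boundary_solution} is also independent of $\lambda$, because $\beta_n=0$.

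We then induct downward from $k=n-1$ to $k=2$. Suppose $\mathbf{f}_{k+1}\to\mathbf{f}_n$. The matrix on the left tends to $I_2$, so its inverse tends to $I_2$, and the term $\mathbf{v}_k/\beta_k$ tends to $0$. Hence $\mathbf{f}_k\to\mathbf{f}_n$.

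To make this rigorous we need the entries of $\mathbf{f}_{k+1}$ to stay bounded. They do, since $0\le f_k^{(m)}\le \pi_m\le 1$, as each is the expectation of an indicator product. Alternatively, boundedness follows from the inductive convergence itself. The case $n=2$ is immediate, since then $\mathbf{f}_2=\mathbf{f}_n$ exactly.

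For the second part, we write $\bar W=\lambda I_2+(\mu I_2+\hat Q)$ and divide \eqref{eq:c_linear_system} by $\lambda$:
\begin{equation*}
\Bigl(I_2+\tfrac{1}{\lambda}(\mu I_2+\hat Q)\Bigr)\mathbf{c}=\tfrac{1}{\lambda}\mathbf{b}+\mathbf{f}_2 .
\end{equation*}
Here $\mathbf{b}$ and $\mu$ are fixed. As $\lambda\to\infty$, the left matrix tends to $I_2$ and is invertible for large $\lambda$. Combined with the first part, this gives $\mathbf{c}=\bigl(I_2+O(1/\lambda)\bigr)^{-1}\bigl(\mathbf{f}_2+O(1/\lambda)\bigr)\to\mathbf{f}_n$.

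Summing the two components gives $\lim_{\lambda\to\infty}c=\mathbf{1}^T\mathbf{f}_n=f_n$. It also gives the stated equality $\lim\mathbf{c}=\lim\mathbf{f}_2$.

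The only delicate point is the uniformity in the finite induction, that is, justifying the limit of the products $W_k^{-1}V_k$ and $W_k^{-1}\mathbf{v}_k$. One can handle this directly with the explicit $2\times2$ inverse: $\det W_k=(\alpha_k+\beta_k)(\alpha_k+\beta_k+q_{12}+q_{21})$, so $W_k^{-1}V_k=I_2+O(1/\lambda)$ and $W_k^{-1}\mathbf{v}_k=O(1/\lambda)$. Since $n$ is fixed, only finitely many such limits are composed, and no uniformity issue arises.
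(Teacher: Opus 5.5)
Your proposal is correct and follows essentially the same route as the paper. The paper divides the $\mathbf{c}$-system by $\lambda$ to get $\mathbf{c}-\mathbf{f}_2\to 0$, then divides the $\mathbf{f}_k$ recursion by $\beta_k$ and iterates down from $k=n-1$ to get $\mathbf{f}_2\to\mathbf{f}_n$. You do the same two steps in the opposite order, and you supply details the paper leaves implicit: boundedness of $\mathbf{f}_k$, the $\lambda$-independence of $\mathbf{f}_n$, and the explicit inverse of $W_k$.
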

\begin{proof}
Let us divide \eqref{eq:c_linear_system} by $\lambda$ to obtain:
\[
\left(I_2+\frac{1}{\lambda}
\begin{bmatrix}
q_{12}+\mu & q_{21}\\
q_{12} & q_{21}+\mu
\end{bmatrix}\right)\mathbf{c}
=
\mathbf{f}_2+\frac{1}{\lambda}\mathbf{b}.
\]
Letting $\lambda$ grow large yields $\lim_{\lambda \to \infty}\mathbf{c} = \mathbf{f}_2$. Moreover, in the recursion for $\mathbf{f}_k$ given in~\eqref{eq:fk_matrix_recursion}, letting $\lambda$, and hence $\beta_k$, grow arbitrarily large for each $k\in\{1,\dots,n-1\}$, forces $\mathbf{f}_k-\mathbf{f}_{k+1}$ to approach $0$; and iterating gives $\lim_{\lambda \to \infty}\mathbf{f}_2 = \mathbf{f}_n$.
Finally,
$\lim_{\lambda \to \infty}c=\mathbf{1}^T\mathbf{f}_n=f_n$.
\end{proof}
As $ \lambda$ grows, each node can communicate with another almost instantly. As a result,  Corollary~\ref{cor:lambda_infty_fixed_ls} suggests that the entire network acts as a single node. Since the gossiping protocol is based on freshness, the average accuracy is then entirely dominated by the accuracy of the freshest node within the network, whose information content spreads throughout the network instantly. Finally, we investigate the case where nodes stop gossiping altogether in the following corollary.
\begin{corollary}\label{cor:lambda_zero_fixed_ls}
Fix $\lambda_s>0$ (equivalently, $\mu>0$). In the no-gossip limit, as $\lambda$ approaches $0$, the mode-tagged accuracy vector satisfies
\begin{equation}\label{eq:cor_lambda_zero_system}
\begin{bmatrix}
q_{12} + \mu & q_{21}\\
q_{12} & q_{21} + \mu
\end{bmatrix}
\lim_{\lambda\rightarrow 0}\mathbf{c}
=
\begin{bmatrix}
q_{21}\pi_2+\mu\pi_1\\
q_{12}\pi_1+\mu\pi_2
\end{bmatrix},
\end{equation}
and has the following closed-form solution
\begin{equation}\label{eq:cor_c_lambda0_closed}
\begin{aligned}
\lim_{\lambda\rightarrow 0}\mathbf{c}
&=
\frac{1}{s(s+\mu)}
\begin{bmatrix}
q_{21}(q_{21}+\mu)\\
q_{12}(q_{12}+\mu)
\end{bmatrix},\\[3pt]
\lim_{\lambda\rightarrow 0}c
&=
\lim_{\lambda\rightarrow 0}\mathbf{1}^T\mathbf{c}
=
\frac{\mu (q_{12}+q_{21}) + q_{12}^2+q_{21}^2}{(q_{12}+q_{21})(q_{12}+q_{21}+\mu)}.
\end{aligned}
\end{equation}
\end{corollary}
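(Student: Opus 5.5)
Since \eqref{eq:c_linear_system} in Theorem~\ref{thm:avg_c_theorem} is linear in $\lambda$, I will pass to the limit $\lambda\to 0$ directly. Write $\bar W=\hat W+\lambda I_2$ with $\hat W\triangleq\begin{bmatrix}q_{12}+\mu & q_{21}\\ q_{12} & q_{21}+\mu\end{bmatrix}$. The right-hand side is $\mathbf{b}+\lambda\mathbf{f}_2$, and $\mathbf{b}$ does not depend on $\lambda$. The entries of $\mathbf{f}_2$ are mode-tagged expectations of an indicator, so $0\le f_2^{(m)}\le\pi_m\le 1$ uniformly in $\lambda$. Hence $\lambda\mathbf{f}_2\to\mathbf{0}$ without having to track the recursion \eqref{eq:fk_backward_recursion} as $\beta_k\to 0$. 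This step is where I would be careful: that backward recursion degenerates at $\beta_k=0$, so the bound must come from the probabilistic definition \eqref{eq:fkm_def}, not from the recursion.

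Next, I need $\hat W$ to be invertible and $\bar W^{-1}$ to be continuous at $\lambda=0$. Compute
\[
\det\hat W=(q_{12}+\mu)(q_{21}+\mu)-q_{12}q_{21}=\mu(q_{12}+q_{21}+\mu)=\mu(s+\mu),
\]
where $s\triangleq q_{12}+q_{21}$. This is strictly positive because $\mu>0$. The determinant of $\bar W$ is a polynomial in $\lambda$ that stays bounded away from zero near $\lambda=0$, so $\bar W^{-1}\to\hat W^{-1}$. Therefore $\lim_{\lambda\to0}\mathbf{c}=\hat W^{-1}\mathbf{b}$, which is exactly \eqref{eq:cor_lambda_zero_system}.

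For the closed form, I would substitute $\pi_1=q_{21}/s$ and $\pi_2=q_{12}/s$ from \eqref{eq:binary_stationary_pi}. This gives
\[
\mathbf{b}=\frac{1}{s}\begin{bmatrix}q_{21}(q_{12}+\mu)\\ q_{12}(q_{21}+\mu)\end{bmatrix}.
\]
Rather than inverting $\hat W$, I would check the candidate solution $\frac{1}{s(s+\mu)}[q_{21}(q_{21}+\mu),\,q_{12}(q_{12}+\mu)]^T$ by multiplying it by $\hat W$. The first row gives
\[
(q_{12}+\mu)q_{21}(q_{21}+\mu)+q_{21}q_{12}(q_{12}+\mu)=q_{21}(q_{12}+\mu)(s+\mu),
\]
which matches $s(s+\mu)$ times the first entry of $\mathbf{b}$. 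The second row follows by the symmetry $q_{12}\leftrightarrow q_{21}$. Uniqueness follows from $\det\hat W>0$.

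Finally, summing the two entries gives
\[
\lim_{\lambda\to0}c=\frac{q_{21}^2+q_{12}^2+\mu s}{s(s+\mu)},
\]
which is the stated expression. I expect no real obstacle beyond the uniform bound on $\mathbf{f}_2$ noted in the first step.
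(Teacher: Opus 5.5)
Your proposal is correct and follows the paper's route: pass to the limit $\lambda\to0$ in $\bar W\mathbf{c}=\mathbf{b}+\lambda\mathbf{f}_2$, then solve the resulting $2\times2$ system using $\pi_1=q_{21}/s$ and $\pi_2=q_{12}/s$. You also supply the justifications the paper leaves implicit, namely the uniform bound $0\le f_2^{(m)}\le\pi_m$ and $\det\hat W=\mu(s+\mu)>0$. One harmless side remark in your first paragraph is inaccurate: the recursion \eqref{eq:fk_backward_recursion} does not degenerate at $\beta_k=0$, because $W_k$ stays invertible for $\alpha_k>0$ and the recursion simply decouples; your probabilistic bound makes this moot in any case.
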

\begin{proof}
As $\lambda$ get arbitrarily close to $0$ in \eqref{eq:c_linear_system},  $\bar W$ loses the $\lambda I_2$ term and the right-hand side of \eqref{eq:c_linear_system} reduces to $\mathbf{b}$. If we solve the resulting $2\times 2$ linear system explicitly, using $\pi_1=\frac{q_{21}}{q_{12}+q_{21}} $ $= 1-\pi_2$, we obtain the closed form expressions in (\ref{eq:cor_c_lambda0_closed}).
\end{proof}
Intuitively, if there are no gossip events in the network, each node's accuracy will be governed entirely by how fast the source is transitioning between states, and how fast the source can send information to each node. Aside from these two source dynamics, there is no way in which accurate information, or any sort of information for that matter, can propagate throughout the network.

\begin{remark}
For a single node, that is, a subset $A_k$ with size $k=1$, the accuracy of the freshest node is equal to the average accuracy, since there is only a single node in the subset. This means that $f_1 = c_1$ under steady--state conditions. Furthermore, since $c_1=c_k, \; \forall k$ as mentioned earlier, solving the recursion for $f_1$ provides a full characterization of the average accuracy within the network. We provide the analysis in this section for completeness and also to give the explicit asymptotic limits in Corollaries~\ref{cor:ls_infty_fixed_lambda}--\ref{cor:lambda_zero_fixed_ls}.
\end{remark}

Next, we consider a special case in which the binary CTMC is symmetric, allowing us to reduce the matrix-valued recursion derived in this section to a scalar-valued one.
\subsection{The Special Case of Symmetric Binary Source CTMCs}
In this subsection, we consider the case where the source's CTMC is symmetric with a generator matrix given by 
\begin{align*}
\mathbf{Q}=\left[\begin{array}{cc} -q & q \\[2pt] q & -q \end{array} \right].
\end{align*}
With this transition matrix, it is easy to see that the steady state distribution of the symmetric CTMC is equal to $\pi_1 = \pi_2 = 0.5$.  

For this special case, the source can be thought of generating status updates following a Poisson process with equivalent rate $\lambda_E'$ obtained from the transition rates as:
\begin{equation}
    \lambda_E' 
    = \pi_1\, q + \pi_2\, q
    = q.
\label{fig:lambda_eff}
\end{equation}
As a result, the SHS model can be simplified such that it fits the form $(\mathbf{C}(t), \mathbf{X}(t)) \in \mathbb{R}^{2n}$, where
$\mathbf{C}(t) = [C_1(t), \ldots, C_n(t)]$ and 
$\mathbf{X}(t) = [X_1(t), \ldots, X_n(t)]$. In other words, when the source's CTMC is symmetric, we do not need to explicitly follow the source's state $Q(t)$ as in the previous subsections. This newly constructed system obeys the differential equation \((\dot{\mathbf C}(t),\, \dot{\mathbf X}(t)) = \mathbf{0}_{2n}\) and has a single discrete mode. The set of transitions reduces to
\begin{equation}
\mathcal{E}
= \{(0,0)\}
  \;\cup\; \{(0,j) : j \in \mathcal{N}\}
  \;\cup\; \{(i,j) : i,j \in \mathcal{N}\},
\label{eq:set_of_transitions_symmetric}
\end{equation}
which consists of $i)$ the event $(0,0)$ representing the state inversions at the source, $ii)$ the event $(0,j)$ representing the source sending updates to node $j$, and $iii)$ the event $(i,j)$ representing node $i$ sending an update to node $j$, i.e.,  gossiping between nodes $i$ and $j$. The rates of these events are given by
\begin{equation}
\lambda_{ij} =
\begin{cases}
\lambda_E',            & i = 0,\; j = 0, \\
\dfrac{\lambda_s}{n},  & i = 0,\; j \in \mathcal{N}, \\
\dfrac{\lambda}{n-1},  & i, j \in \mathcal{N}.
\end{cases}
\label{eq:lambda_map_symmetric}
\end{equation}
We can now analyze the accuracy of the freshest node in set $A_k$ for a symmetric binary source, denoted by $F_{A_k, \text{sym}}$, and the accuracy of set $A_k$ denoted by $C_{A_k, \text{sym}}$, without their mode-tagged counterparts due to the source transition rate symmetry. Using the reset maps given in~\eqref{eq:C_Ak_update} and~\eqref{eq:F_Ak_update}, alongside the SHS generator identity, we obtain 
\begin{equation}
\sum_{(i,j)\in\mathcal{E}} \lambda_{ij}
\Bigl(
    \mathbb{E}[\psi(\phi_{i,j})]
    -
    \mathbb{E}[\psi]
\Bigr) = 0.
\label{eq:symmetric_balance}
\end{equation}
Then, we can obtain the steady-state values $f_{k, \text{sym}} = \lim_{t \to \infty} \mathbb{E}[F_{A_k, \text{sym}}(t)]$ as follows:
\begin{equation}
\lambda_E' (1 - 2 f_{k, \text{sym}})
+ \alpha_k (1 - f_{k, \text{sym}})
+ \beta_k (f_{k+1, \text{sym}} - f_{k, \text{sym}})= 0,
\label{eq:fk_balance}
\end{equation}
where $\alpha_k$ and $\beta_k$ are defined in (\ref{eq:alpha_beta_def}). Finally, rearranging the terms in \eqref{eq:fk_balance} yields the backward recursion given by
\begin{equation}
f_{k,\text{sym}} =
\frac{
\lambda_E'
+ \alpha_k
+ \beta_k f_{k+1,\text{sym}}
}{
2\lambda_E'
+ \alpha_k
+ \beta_k
}\cdot
\label{eq:fk_recursive}
\end{equation}

Next, defining $C_{\emptyset,\text{sym}} = 0$, since an empty set cannot contain accurate information, a similar set of derivations yields the steady-state characterization of $C_{A_k,\text{sym}}$, given by $c_{k,\text{sym}} = \lim_{t \to \infty} \mathbb{E}[C_{A_k,\text{sym}}(t)]$ Additionally, we denote the steady-state value of $F_{\{i,j\},\text{sym}}$ as $f_{2,\text{sym}} = \lim_{t \to \infty} \mathbb{E}[F_{\{i,j\},\text{sym}}(t)]$. Thus, we have:  
\begin{equation*}
\begin{aligned}
\!\lambda_E'(1 - 2c_k)+
\!\lambda k\left(\dfrac{f_{2,\text{sym}} - c_{1,\text{sym}}}{k} \right) \!+ \!\alpha_k
\left(\!\dfrac{1\! -\! c_{1,\text{sym}}}{k}\! \right) \!=\! 0.
\end{aligned}
\end{equation*}
Collecting all terms in $c_{k,\text{sym}}$ and solving for $c_{k,\text{sym}}$ alongside $f_{1,\text{sym}}=c_{1,\text{sym}} = c_{k,\text{sym}}$ gives
\begin{equation}
c_{k,\text{sym}} \;=\;
\frac{
\lambda_E'
+ \frac{\lambda_s}{n}
+ \lambda f_{2,\text{sym}}
}{
2\lambda_E' + \frac{\lambda_s}{n} + \lambda
},
\label{eq:ck_recursive2}
\end{equation}
for all $k\!\geq \!1$. As it can be seen from~\eqref{eq:fk_recursive} and~\eqref{eq:ck_recursive2}, the recursive formulations for both freshness-based accuracy and average accuracy reduce to scalar-valued ones instead of matrix-valued ones, since there is no need to explicitly track the source state via indicator functions. Next, we extend our accuracy analysis for a source having a general $M>2$-state CTMC. 

\section{The General Multi-State Information Source}\label{sect:n_state}
We now consider the case when the information source has a CTMC with $M>2$ states, i.e., $Q(t)\in\{1,\cdots,M\}$. Different from the binary case, we can no longer assume that both the average accuracy $C_{A_k}$ and the freshness-based accuracy $F_{A_k}$ simply flip after a source inversion, since the source no longer \emph{inverts}, but rather changes to another state. Therefore, after a source transition, the accuracy of the freshest node in set $A_k$, i.e., $F_{A_k}$, depends on whether or not the freshest node holds the contents of the next state the source transitions into, i.e., $Q'(t)$. Thus, under the new source definition, $F_{A_k}(t)$ can be re-expressed as follows:
\begin{equation}
F_{A_k}(t) \triangleq \mathds{1}_{\{ R_k(t) = Q(t) \}},
\label{eq:F_Ak_def}
\end{equation}
where $R_k(t) = S_{\arg\min_{\, j \in A_k} X_j(t)}$ represents the content of the freshest node within the $k$-node subset $A_k$ of the network. Next, we define a new joint process as
\begin{equation}
Z_k(t) \triangleq \bigl(Q(t),\, R_k(t)\bigr),
\label{eq:z_k_defn}
\end{equation}
which takes values in the finite state space
\begin{equation}
\mathcal{S}
\triangleq
\{(q,r) : q\in\{1,\ldots,M\},\; r\in\{1,\ldots,M\}\}.
\label{eq:state_space}
\end{equation}

Let $\pi^{(k)}_{q,r}$ denote the stationary distribution of $Z_k(t)$, which is given by
\begin{equation}
\pi^{(k)}_{q,r}
\triangleq
\mathbb{P}(Q=q,\; R_k=r),
\label{eq:stationary_dist_qr}
\end{equation}
for all $(q,r)\in\mathcal{S}$. The resulting steady-state freshness-based accuracy, $f_k = \lim_{t \to \infty} \mathbb{E}[F_{A_k}(t)]$, can be obtained by
\begin{equation}
f_k
\triangleq
\mathbb{P}(R_k = Q)
=
\sum_{q=1}^M \pi^{(k)}_{q,q}.
\label{eq:f_k_summation_defn}
\end{equation}

For each $k\in\{1,\ldots,n\}$, the process $Z_k(t)$ is a CTMC on $\mathcal{S}$. We list the states of $Z_k(t)$ as $\{(1,1),(1,2),\cdots,(1,M),(2,1),\cdots,(2,M),\cdots,(M,M)\}$ and define a generator matrix $Q_k$ which has dimensions $M^2 \times M^2$.
The transitions in $Q_k$ arise from three cases: $i)$ source-to-source transitions of $Q(t)$, $ii)$ source-to-$A_k$ update events, and $iii)$  gossip events originating from nodes outside $A_k$. As a result, the matrix $Q_k$ will consist of three distinct components, $Q_k^{\text{src}},Q_k^{\text{push}}$, and $Q_k^{\text{out}}$ all of which have dimensions $M^2 \times M^2$, respectively. We now describe the contribution of each case. 

A source transition from state $i$ to state $j$ with $i\neq j$ does not change $R_k(t)$. Therefore, the corresponding diagonal and off-diagonal entries of the source component of the generator, $Q_k^{\mathrm{src}}$, are given by 
\begin{align}
Q_k^{\mathrm{src}}\bigl((i,r),(j,r)\bigr)
=
\begin{cases}
q_{i j},           &\text{for } j\neq i,\\
-\sum_{\substack{j=1 \\ j\neq i}}^{M}
q_{i j},  &\text{for }  j=i,
\end{cases}
\label{eq:source_to_source_trans}
\end{align}
for all $r$ and the remaining entries of $Q_k^{\mathrm{src}}$ are equal to 0. We also note that each row of $Q_k^{\mathrm{src}}$ sums to zero. 

Next, the total update rate of pushes from the source into the set $A_k$ is $\alpha_k$ as described in~\eqref{eq:alpha_beta_def}.
A push event sets the freshest content equal to the source state. If the freshest content is already equal to the source state, no change occurs. Hence, the push component of the generator, $Q_k^{\mathrm{push}}$, has off-diagonal entries
\begin{equation}
Q_k^{\mathrm{push}}\bigl((i,r),(i,i)\bigr)
=
\alpha_k\, \mathds{1}_{\{r\neq i\}},
\label{eq:q_k_push_defn}
\end{equation}
and diagonal entries
\begin{equation}
Q_k^{\mathrm{push}}\bigl((i,r),(i,r)\bigr)
=
-\alpha_k\, \mathds{1}_{\{r\neq i\}},
\label{eq:q_k_push_diag}
\end{equation}
and the remaining entries of $Q_k^{\mathrm{push}}$ are given as 0. Gossip events arriving from outside $A_k$ into the subset occur at rate
$\beta_k$. As in the binary case, we need to enlarge the set of nodes to obtain a characterization of $F_{A_k}$ under gossiping events. When we enlarge the set of interest from $k$ to $k+1$ nodes, the freshest node's content changes from $r$ to $r'$ according to the conditional distribution of $R_{k+1}$ given the source state. We therefore define the conditional probabilities 
\begin{equation}
p^{(k+1)}_{r\mid q}
\triangleq
\mathbb{P}(R_{k+1}=r \mid Q=q),
\label{eq:k_plus_1_conditional_defn}
\end{equation}
for $q,r\in\{1,\ldots,M\}$. Using the stationary distribution of the $(k+1)$-node process, we have
\begin{equation}
p^{(k+1)}_{r\mid q}
=
\frac{\pi^{(k+1)}_{q,r}}
     {\pi^{(k+1)}_{q}},
\label{eq:k_plus_1_conditional_pi}
\end{equation}
where $\pi^{(k+1)}_{q}
\triangleq
\sum_{r'=1}^M \pi^{(k+1)}_{q,r'}.$
Then, the outsider gossip into $A_k$ induces transitions in the gossip component of the generator, $Q_k^{\mathrm{out}}$, of the form
\begin{equation}
Q_k^{\mathrm{out}}\bigl((q,r),(q,r')\bigr)
=
\beta_k p^{(k+1)}_{r'\mid q},
\qquad
r'\neq r,
\label{eq:q_k_out_defn}
\end{equation}
with the corresponding diagonal entries
\begin{equation}
Q_k^{\mathrm{out}}\bigl((q,r),(q,r)\bigr)
\!=\!
-\beta_k
\sum_{\substack{r'=1 \\ r'\neq r}}^{M}
p^{(k+1)}_{r'\mid q}
\!=\!
-\beta_k
\bigl(1 \!-\! p^{(k+1)}_{r\mid q}\bigr).\!\!
\label{eq:q_k_out_diag}
\end{equation}
Combining these three cases, the total generator $Q_k$ of $Z_k(t)$ is given by
\begin{equation}
Q_k
=
Q_k^{\mathrm{src}}
+
Q_k^{\mathrm{push}}
+
Q_k^{\mathrm{out}}.
\label{eq:Qk_total}
\end{equation}
For each fixed $k$, the stationary distribution $\pi^{(k)}$ satisfies 
\begin{equation}
\boldsymbol{\pi}^{(k)} Q_k = \boldsymbol{0}, 
\label{eq:pi_k_stationary_eq}
\end{equation}
 and $\sum_{(q,r)\in\mathcal{S}} \pi^{(k)}_{q,r} = 1$ where $\boldsymbol{\pi}^{(k)}$ is the row vector consisting of
$\boldsymbol{\pi}^{(k)}=
\bigl[
\pi^{(k)}_{1,1},\allowbreak \cdots,\allowbreak \pi^{(k)}_{1,M},\allowbreak
\pi^{(k)}_{2,1},\allowbreak \cdots,\allowbreak \pi^{(k)}_{2,M},\allowbreak
\cdots,\allowbreak
\pi^{(k)}_{M,1},\allowbreak \cdots,\allowbreak \pi^{(k)}_{M,M}
\bigr].$ Since the values $p^{(k+1)}_{r\mid q}$ in \eqref{eq:k_plus_1_conditional_defn}–\eqref{eq:k_plus_1_conditional_pi} depend on the stationary distribution for the $(k+1)$-node system, the generator matrices $\{Q_k\}$ and the corresponding stationary distributions can be built recursively by following Algorithm~\ref{alg:backward_Qk_pi_fk}.

\begin{algorithm}[t]
\caption{Backward Construction of $\{Q_k,\boldsymbol{\pi}^{(k)}\}_{k=1}^n$ and Computation of $\{f_k\}_{k=1}^n$}
\label{alg:backward_Qk_pi_fk}
\begin{algorithmic}[1]
\Require Network size $n$; source CTMC state-count $M$, source push rate $\alpha_k$, gossip rate $\beta_k$; stationary equation \eqref{eq:pi_k_stationary_eq}; conditional definitions \eqref{eq:k_plus_1_conditional_defn}--\eqref{eq:k_plus_1_conditional_pi}.
\Ensure Generator matrices $\{Q_k\}_{k=1}^n$, stationary distributions $\{\boldsymbol{\pi}^{(k)}\}_{k=1}^n$, and accuracies $\{f_k\}_{k=1}^n$.

\State \textbf{Initialize at $k=n$:} set $\beta_n \gets 0$.
\State Form $Q_n \gets Q_n^{\mathrm{src}} + Q_n^{\mathrm{push}}$.
\State Solve \eqref{eq:pi_k_stationary_eq} for $k=n$ to obtain $\boldsymbol{\pi}^{(n)}$.

\For{$k = n-1, n-2, \ldots, 1$}
    \State Using $\boldsymbol{\pi}^{(k+1)}$, compute the conditional probabilities
    $p^{(k+1)}_{r\mid q}$ via \eqref{eq:k_plus_1_conditional_defn}--\eqref{eq:k_plus_1_conditional_pi}.
    \State Form $\!Q_k\!$ using $p^{(k+1)}_{r\mid q}$ (and the model construction of $\!Q_k\!$).
    \State Solve \eqref{eq:pi_k_stationary_eq} for the stationary distribution $\boldsymbol{\pi}^{(k)}$.
\EndFor

\For{$k = 1,2,\ldots,n$}
    \State Compute $f_k$ from $\boldsymbol{\pi}^{(k)}$ using \eqref{eq:f_k_summation_defn}.
\EndFor

\State \Return $\{Q_k,\boldsymbol{\pi}^{(k)},f_k\}_{k=1}^n$.
\end{algorithmic}
\end{algorithm}

After the stationary distribution $\pi^{(k)}_{q,r}$ has been obtained for each $k$, the freshness-based accuracy $f_k$ is given by \eqref{eq:f_k_summation_defn}. Note that, as in the binary case, the average accuracy of any subset of size $k$ coincides with the accuracy of a single node, that is,
\begin{equation}
f_1 = c_1 = c_k, \qquad \forall k\in\{1,\ldots,n\},
\label{eq:fk_ck_relation}
\end{equation}
so this joint CTMC model characterizes the average accuracy for any arbitrary subset of the network with cardinality $k$.

We next present a lemma showing that the network’s information-content distribution depends only on the stationary distribution of the source CTMC.
\begin{lemma}\label{lem:number_of_nodes}
The expected number of nodes in an arbitrary subset of the network with cardinality $k$ holding information $q$ is equal to $k\pi_q$ for any given $\lambda,\lambda_s>0$  under steady--state conditions.
\end{lemma}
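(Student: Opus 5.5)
The plan is to work with the \emph{content} of nodes rather than their accuracy. For a $k$-node subset, the expected number of nodes holding information $q$ is $\sum_{\ell\in A_k}\mathbb{P}(S_\ell=q)$. By the symmetry of the fully-connected network, every term equals $\mathbb{P}(S_1=q)$, so it suffices to show $\mathbb{P}(S_1=q)=\pi_q$. Since $R_1=S_1$ for a singleton set, this is the statement that the $r$-marginal of $\boldsymbol{\pi}^{(1)}$ equals $\pi_q$. I would prove the stronger claim that $\rho_k(q)\triangleq\mathbb{P}(R_k=q)=\sum_{q'}\pi^{(k)}_{q',q}$ equals $\pi_q$ for every $k\in\{1,\ldots,n\}$, by backward induction on $k$ mirroring Algorithm~\ref{alg:backward_Qk_pi_fk}.

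\textbf{Step 1: balance equation for $\rho_k(q)$.} Apply the SHS balance identity \eqref{eq:shs_balance_equation} with the test function $\psi=\mathds{1}_{\{R_k=q\}}$, or equivalently sum the stationary equations $\boldsymbol{\pi}^{(k)}Q_k=\mathbf{0}$ over all states $(q',q)$ with fixed second coordinate $q$. The contributions of the three components are as follows.
\begin{itemize}
\item Source transitions ($Q_k^{\mathrm{src}}$) change only $Q$ and leave $R_k$ untouched, so they contribute nothing. The key observation is that, unlike $F_{A_k}$, content does not ``flip'' when the source changes state.
\item Pushes into $A_k$ occur at rate $\alpha_k$ and set $R_k=Q$. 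They contribute $\alpha_k\bigl(\mathbb{P}(Q=q)-\rho_k(q)\bigr)=\alpha_k(\pi_q-\rho_k(q))$, using that $Q$ is marginally stationary.
\item Outsider gossip occurs at rate $\beta_k$ and replaces $R_k$ by $R_{k+1}$, exactly as in \eqref{eq:F_Ak_update} and \eqref{eq:q_k_out_defn}. It contributes $\beta_k(\rho_{k+1}(q)-\rho_k(q))$. In the $Q_k^{\mathrm{out}}$ form this follows because $\sum_{q'}\pi^{(k)}_{q'}p^{(k+1)}_{q\mid q'}=\sum_{q'}\pi_{q'}p^{(k+1)}_{q\mid q'}=\rho_{k+1}(q)$, which uses that the $Q$-marginal of each $\boldsymbol{\pi}^{(k)}$ is $\boldsymbol{\pi}$.
\item Internal gossip within $A_k$ does not change the freshest node's content, so it contributes nothing.
\end{itemize}
Altogether this gives
\[
\alpha_k\bigl(\pi_q-\rho_k(q)\bigr)+\beta_k\bigl(\rho_{k+1}(q)-\rho_k(q)\bigr)=0,\qquad k=1,\ldots,n-1.
\]

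\textbf{Step 2: base case and induction.} At $k=n$ we have $\beta_n=0$ and $\alpha_n=\lambda_s>0$, so the equation reduces to $\lambda_s(\pi_q-\rho_n(q))=0$, giving $\rho_n(q)=\pi_q$. Now suppose $\rho_{k+1}(q)=\pi_q$. The equation becomes $(\alpha_k+\beta_k)(\pi_q-\rho_k(q))=0$, and since $\alpha_k>0$ we get $\rho_k(q)=\pi_q$. Iterating down to $k=1$ yields $\mathbb{P}(S_1=q)=\pi_q$, hence expected count $k\pi_q$. This holds regardless of the values of $\lambda,\lambda_s>0$.

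\textbf{Main obstacle.} The delicate point is the justification that the $Q$-marginal of $\boldsymbol{\pi}^{(k)}$ is $\boldsymbol{\pi}$. This is needed both for the push term and for the gossip term in Step 1. I would establish it first, by summing $\boldsymbol{\pi}^{(k)}Q_k=\mathbf{0}$ over $r$ for each fixed $q$:
\begin{itemize}
\item $Q_k^{\mathrm{push}}$ and $Q_k^{\mathrm{out}}$ preserve $q$ and have zero row sums, so they cancel.
\item What remains is the stationarity equation of $\mathbf{Q}$ for the $Q$-marginal.
\end{itemize}
Irreducibility of $\mathbf{Q}$ then forces that marginal to be $\boldsymbol{\pi}$.

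A second subtlety is that the replacement $R_k\mapsto R_{k+1}$ under outsider gossip relies on symmetry: the freshest content of $A_k\cup\{i\}$ is distributed as $R_{k+1}$, jointly with $Q$. This is the same exchangeability argument already used for \eqref{eq:F_Ak_update}, and I would invoke it in the same way.
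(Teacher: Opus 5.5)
Your proposal is correct and follows essentially the same route as the paper. Your $\rho_k(q)$ is the paper's $h_k^{(q)}=\lim_{t\to\infty}\mathbb{E}[H_{A_k}^{(q)}(t)]$; you derive the same backward recursion $\alpha_k(\pi_q-\rho_k(q))+\beta_k(\rho_{k+1}(q)-\rho_k(q))=0$ with base case $\rho_n(q)=\pi_q$, and conclude $n_k^{(q)}=k\,h_1^{(q)}=k\pi_q$ by linearity and symmetry. The differences are minor: you explicitly check that the $Q$-marginal of $\boldsymbol{\pi}^{(k)}$ is $\boldsymbol{\pi}$, which the paper uses implicitly, and you skip the paper's redundant extra balance equation for $N_{A_k}^{(q)}$.
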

\begin{proof}
Consider an extension of the SHS model described in~\eqref{eq:hybrid_state},~\eqref{eq:set_of_transitions}, and~\eqref{eq:lambda_map} with $M$ discrete modes corresponding to the $M$ source states.
 
Let $\bar{\psi}_k(\mathbf{C},\mathbf{X}) = N_{A_k}^{(q)}$ be a time-invariant test function that gives the number of nodes holding information $q$ in set $A_k$, formally defined as:
\begin{align}\nonumber\\[-21pt]
    \bar{\psi}_k(\mathbf{C},\mathbf{X}) \triangleq N_{A_k}^{(q)}(t)
= \sum_{j\in A_k}\mathds{1}_{\{S_j(t)=q\}},
\label{eq:n_ak_definition}\\[-21pt]\nonumber
\end{align}
Then, the transition map for $N_{A_k}^{(q)}$ can be written as follows:

\begin{align}\nonumber\\[-21pt]
\!\!N_{A_k}^{(q)\prime}\!\! =\!\!
\begin{cases}
N_{A_k}^{(q)} \!+\! \mathds{1}_{\{Q=q\}} \!- \!H_{\{j\}}^{(q)},
& \!\!\!e\!=\!(0,j), j\!\in\! A_k, \\
N_{A_{k}}^{(q)} \!+\! H_{\{i,j\}}^{(q)} \!- \!H_{\{j\}}^{(q)},
& \!\!\!e\!=\!(i,j), i\!\in\! \mathcal{N},  j \!\in\! A_k,\!\!\! \\
N_{A_k}^{(q)},
&\!\!\! \text{otherwise,}
\end{cases}
\label{eq:NAkq_update}\\[-21pt]\nonumber
\end{align}
where $H_{A_k}^{(q)}(t) = \mathds{1}_{\{S_{\operatorname*{arg\,min}_{j \in A_k} X_j(t)} = q\}}$. Note that $H_{A_k}^{(q)}(t)$ reduces to $H_{j}^{(q)}(t)= \mathds{1}_{\{S_j(t)=q\}}$ for a subset consisting of a single node. This allows us to model changes to the indicator function term in~\eqref{eq:n_ak_definition} using recursive expressions for $H_{A_k}$. In order to characterize $n_k^{(q)}= \lim_{t \to \infty}\mathbb{E}[N_{A_k}^{(q)}(t)]$, we first need to characterize $h_k^{(q)} = \lim_{t \to \infty}\mathbb{E}[H_{A_k}^{(q)}(t)]$, which can be done by using the following reset map:
\begin{align}\nonumber\\[-21pt]
H_{A_k}^{(q)\prime} =
\begin{cases}
\mathds{1}_{\{Q=q\}},
& e=(0,j),\; j\in A_k, \\
H_{A_{k+1}}^{(q)},
& e=(i,j),\; i\in \mathcal{N}\setminus A_k,\; j\in A_k, \\
H_{A_k}^{(q)},
& \text{otherwise.}
\end{cases}
\label{eq:SAkq_update}\\[-21pt]\nonumber
\end{align}
Recognizing the fact that $\lim_{t\to \infty}\mathbb{E}[\mathds{1}_{\{Q=q\}}] = \pi_q$, the steady-state balance equations for $H_{A_k}^{(q)}$ can be written as follows:
\begin{align}\nonumber\\[-21pt]
  h_k^{(q)}
= \frac{
  \dfrac{k\lambda_s}{n}\,\pi_q
  + \dfrac{k(n-k)\lambda}{\,n-1\,}\,h_{k+1}^{(q)}
}{
  \dfrac{k\lambda_s}{n}
  + \dfrac{k(n-k)\lambda}{\,n-1\,}
}.
\label{eq:hkq_recursion}  \\[-21pt]\nonumber
\end{align}
Using the fact that the recursion reduces to $h_n^{(q)} = \pi_q$ for $k=n$, it can be shown that $h_k^{(q)}=\pi_q$ for all $k\in \{1,\cdots,n\}$. Using the result obtained from~\eqref{eq:hkq_recursion}, a recursive expression for $n_k^{(q)}$ can be written as follows:
\begin{align}
0
&=
\frac{k\lambda_s}{n}\bigl(\pi_q-h_1^{(q)}\bigr)
+
k\lambda\bigl(h_2^{(q)}-h_1^{(q)}\bigr).
\label{eq:nk_proof_intermediate_eqn}
\end{align}
From $N_{A_k}^{(q)}$'s definition in~\eqref{eq:n_ak_definition}, we have
\begin{equation}
    n_k^{(q)}=\lim_{t \to \infty}\mathbb{E}[N_{A_k}^{(q)}]=\sum_{j\in A_k}\lim_{t \to \infty}\mathbb{E}[\mathds{1}_{\{S_j=q\}}]=k\,h_1^{(q)}.
\end{equation}
Plugging this result back alongside $\!h_1^{(q)}\!\!\!=\!\!\frac{n_k^{(q)}}{k}\!$ into~\eqref{eq:nk_proof_intermediate_eqn}, we get
\begin{align}
0
&=
\frac{k\lambda_s}{n}\!\left(\pi_q-\frac{n_k^{(q)}}{k}\right)
+
k\lambda\bigl(h_2^{(q)}-h_1^{(q)}\bigr),\\
n_k^{(q)}
&=
k\pi_q+\frac{nk\lambda}{\lambda_s}\bigl(h_2^{(q)}-h_1^{(q)}\bigr).
\label{eq:nk_proof_step}
\end{align}
Finally, using \(h_2^{(q)}=h_1^{(q)}=\pi_q\) in \eqref{eq:nk_proof_step}, we have
\begin{equation}
    {n_k^{(q)}=k\pi_q}.
\end{equation}
This concludes the proof.
\end{proof}

\section{Information Accuracy Due to Freshness}\label{sect:usefulness}
As noted earlier, a node can possess accurate information in two distinct ways. First, the node may hold the most recent update generated at the source, in which case its version age is zero. Second, even if the node does not have the freshest update, that is, its version age is strictly positive, it may still be accurate because subsequent information flips at the source cause the source’s current state to coincide with the information stored at the node. In this case, despite being outdated in terms of version age, the node’s information remains accurate. In this section, we distinguish between the two cases and quantify the resulting information inaccuracy attributable to $i)$ freshness effects and $ii)$ source-state transitions.

\begin{theorem}\label{thm:acc_fresh_thm}
    The fraction of nodes within any given subset of the network with cardinality $k \in \{1,2,\cdots,n-1\}$ that hold strictly fresh and accurate information is given by $G_k = \sum_{m=1}^{M} c_k^{(m)}g_k^{(m)},$ for $ M \in \{2,3,\ldots\}$, where $g_k^{(m)}=
    \frac{\alpha_k\pi_m + \beta_k g_{k+1}^{(m)}}
    {\nu_m+\alpha_k+\beta_k}$.
\begin{proof}
    To begin the proof, we investigate the average accuracy due to freshness and source-state flips using the SHS approach for a binary source CTMC, in a manner similar to that in Section~\ref{sect:binary}. We then extend this SHS model with $M$ different test functions for the general multi-state information source. For the binary case, consider two new mode-tagged test functions to explicitly model how effective information freshness (or similarly, source-state flips) are for accurate information dissemination. We define $\tilde\psi_k^{(m)}(Q, \mathbf Z)
\;=\;
c_k^{(m)}\,\mathds{1}_{\{Q=m, X_{A_k} = 0\}}$ to denote the average accuracy of set $A_k$ due to fresh information, whose steady-state values will be denoted by:
\begin{align}\nonumber\\[-21pt]
\lim_{t \to \infty}\mathbb{E}\big[\tilde\psi_k^{(m)}\big]
\;=\;
c_k^{(m)}\,g_k^{(m)},\\[-21pt]\nonumber
\end{align}
where $g_k^{(m)}
\;\triangleq\;
\lim_{t \to \infty}\mathbb{E}\left[ \mathds{1}_{\{Q=m, X_{A_k}=0\}}\right]
\;=\;
\mathbb{P}(Q=m,\ X_{A_k}=0).$
Since $c_k^{(m)}$ is a constant (already determined by the information accuracy analysis), only $g_k^{(m)}$ needs to be characterized.
Since the analysis for both test functions is similar, we will only provide the analysis of the first mode-tagged function here, in a manner similar to that of the mode-tagged freshness functions' analyses.

Consider the transition $e=(0,0,1 \to 2)$. Before the transition, $\tilde\psi_k^{(1)}(Q,\mathbf Z)=\mathds{1}_{\{Q=1, X_{A_k=0}\}}$ evaluates to $\tilde\psi_k^{(1)}(Q,\mathbf Z)
\;=\;\mathds{1}_{\{X_{A_k}=0\}},$  as the source content indicator gives 1. After the aforementioned transition, the version age of all nodes in the network increases by 1, as given by the transition map in~\eqref{eq:X_Ak_update}. As a result, the indicator function for version age will cause the expression to evaluate to zero, leading to the overall change given by $-\mathds{1}_{\{Q=1,X_{A_k}=0\}}$ term.

Now, consider the source push transition given by $(0,j)$ for $ j \in A_k$. After such a transition, the recipient node will contain accurate and fresh information, leading the test function to equal $\tilde\psi_k^{(1)}(Q,\mathbf Z)
\;=\;
\,\mathds{1}_{\{Q=1\}}\,$ which leads to the overall change of $\mathds{1}_{\{Q=1\}} - \mathds{1}_{\{Q=1, X_{A_k = 0}\}}$ under this transition.

Finally, consider gossiping events given by the transition $ e= (i,j)$ for $ j\in A_k$ and $ i\in \mathcal{N} \setminus A_k$. Under this transition, the source content is not changed, and the version age analysis has to be extended to the super-set $A_{k+1}$ as given by~\eqref{eq:X_Ak_update}. As a result, the overall change of the test function under this transition will be $\mathds{1}_{\{Q=1\}}
\big(
\mathds{1}_{\{X_{A_{k+1}}=0\}}
-
\mathds{1}_{\{X_{A_k}=0\}}
\big)$.

Taking the expectations of all these terms, and adding the rates $\alpha_k$ and $\beta_k$ for source pushes and outsider gossip events respectively, the following SHS balance equation is obtained:
\begin{align}\nonumber\\[-21pt]
g_k^{(1)}
= \frac{\alpha_k \pi_1 + \beta_k g_{k+1}^{(1)}}
       {q_{12} + \alpha_k + \beta_k}.
\label{eq:g_backward_recursion_mode_1}\\[-21pt]\nonumber
\end{align}
With similar derivations, an expression for the second-mode counterpart can be obtained as follows:
\begin{align}\nonumber\\[-21pt]
g_k^{(2)}
= \frac{\alpha_k \pi_2 + \beta_k g_{k+1}^{(2)}}
       {q_{21} + \alpha_k + \beta_k}.
\label{eq:g_backward_recursion_mode_2}\\[-20pt]\nonumber
\end{align}

We define $\mathbf{g}_k \triangleq [g_k^{(1)}, g_k^{(2)}]^T$ and $\boldsymbol{\pi}\triangleq [\pi_1,\pi_2]^T. $ Then, these two equations can be written in terms of a matrix-recursion as the earlier test functions in the form of:
\begin{align}\nonumber\\[-21pt]
\mathbf{g}_k
=
\mathbf{A}_k\,\boldsymbol{\pi}
+
\mathbf{B}_k\,\mathbf{g}_{k+1},
\label{eq:g_matrix_backward_recursion}\\[-21pt]\nonumber
\end{align}
where
\begin{align*}\\[-21pt]
\resizebox{\columnwidth}{!}{$
\begin{aligned}
\mathbf{A}_k
&\triangleq
\begin{bmatrix}
\dfrac{\alpha_k}{q_{12}+\alpha_k+\beta_k} & 0\\
0 & \dfrac{\alpha_k}{q_{21}+\alpha_k+\beta_k}
\end{bmatrix},\quad
\mathbf{B}_k
\triangleq
\begin{bmatrix}
\dfrac{\beta_k}{q_{12}+\alpha_k+\beta_k} & 0\\
0 & \dfrac{\beta_k}{q_{21}+\alpha_k+\beta_k}
\end{bmatrix}.
\end{aligned}
$}\\[-21pt]
\end{align*}

This concludes the analysis for the binary source case. This analysis can be extended to the the general information source with $M$ states, with $M$ distinct mode-tagged test functions of the form $\tilde\psi_k^{(m)}(Q, \mathbf Z)
\;=\;
c_k^{(m)}\,\mathds{1}_{\{Q=m, X_{A_k} = 0\}}$ for $m \in \{1,\dots,M\}$.

Recognizing the pattern from the recursive expressions in~\eqref{eq:g_backward_recursion_mode_1} and~\eqref{eq:g_backward_recursion_mode_2}, the derivation can be generalized using these $M$ distinct test functions as follows:
\begin{align}\nonumber\\[-21pt]
g_k^{(m)} =
\frac{\alpha_k\pi_m + \beta_k g_{k+1}^{(m)}}
{\nu_m+\alpha_k+\beta_k},
\qquad
k=1,\ldots,n-1.
\label{eq:M_state_gk_final}\\[-21pt]\nonumber
\end{align}
where $\nu_m \triangleq \sum_{r\neq m} q_{mr} = 1-q_{mm}.$ As a result, the fraction of nodes in an arbitrary $k$-node subset holding accurate and fresh information can then be obtained by $G_k = \sum_{m=1}^{M} c_k^{(m)}g_k^{(m)},$ for $ M \in \{2,3,\ldots\}$ which concludes the proof.
\end{proof}
\end{theorem}

We now state several limits that arise as a direct consequence of Theorem \ref{thm:acc_fresh_thm}, starting with the fraction of \emph{accurate and stale} nodes within the network.
\begin{corollary}
The fraction of nodes within an arbitrary $k$-node subset that have accurate but stale information is given by $c_k-G_k$, where the values of $c_k$ can be obtained by following the analysis in Sections~\ref{sect:binary} and~\ref{sect:n_state}.
\begin{proof}
Recall that $c_k$ denotes the overall probability (fraction) that set $A_k$ holds accurate information, i.e.,
$c_k = \mathbb{P}(R_k = Q)$, while $G_k$ denotes the probability that it is both accurate and \emph{strictly fresh},
i.e., $G_k = \mathbb{P}(R_k = Q,\; X_{A_k}=0)$.
Since the events $\{X_{A_k}=0\}$ (fresh) and $\{X_{A_k}>0\}$ (stale) are disjoint and exhaustive, we have
\begin{align*}\\[-21pt]
\{R_k=Q\}
=
\{R_k=Q,\;X_{A_k}=0\}
\;\cup\;
\{R_k=Q,\;X_{A_k}>0\}.\\[-21pt]
\end{align*}
Taking probabilities yields
$
c_k = G_k + \mathbb{P}(R_k = Q,\; X_{A_k}>0),
$
hence the fraction of accurate-but-stale nodes is
$
\mathbb{P}(R_k = Q,\; X_{A_k}>0) = c_k - G_k.
$
\end{proof}
\end{corollary}

We now investigate how information accuracy due to freshness behaves if the source can push updates to the network without rate limits.
\begin{corollary}\label{cor:high_ls}
For any $k\in\{1,\ldots,n\}$ and mode $m\in\{1,\ldots,M\}$, if $\lambda_s$, and hence $\alpha_k$, grows arbitrarily large,
then
\begin{align}\nonumber\\[-21pt]
\lim_{\lambda_s\to\infty} g_k^{(m)} = \pi_m,~
\lim_{\lambda_s\to\infty} G_k\! =\! \sum_{m=1}^M c_k^{(m)}\pi_m\!= \!\sum_{m=1}^M \pi_m^2.\\[-21pt]\nonumber
\end{align}
\begin{proof}
From \eqref{eq:M_state_gk_final}, we have
\begin{align*}\\[-21pt]
g_k^{(m)}=\frac{\alpha_k\pi_m+\beta_k g_{k+1}^{(m)}}{\nu_m+\alpha_k+\beta_k}.\\[-21pt]
\end{align*}
Divide numerator and denominator by $\alpha_k$ and take the limit as $\alpha_k$ grows arbitrarily large:
\begin{align*}\nonumber\\[-21pt]
\lim_{\alpha_k \to \infty}g_k^{(m)}
=
\lim_{\alpha_k \to \infty}\frac{\pi_m + (\beta_k/\alpha_k) g_{k+1}^{(m)}}{(\nu_m/\alpha_k)+1+(\beta_k/\alpha_k)}
\;=
\pi_m,\\[-21pt]\nonumber
\end{align*}
since $0\le g_{k+1}^{(m)}\le 1$. The claim for $G_k=\sum_m c_k^{(m)}g_k^{(m)}$ follows from the high $\lambda_s$ limit of $c_k$ given in~\eqref{eq:cor_ls_infty}.
\end{proof}
\end{corollary}
Intuitively, when the source pushes arbitrarily fast, nodes are \emph{almost surely fresh}, so the event $\{X_{A_k}=0\}$ loses selectivity and the mode probability is dominated by the source’s stationary behavior $\pi_m$. 

We now investigate when the source stops updating the network.
\begin{corollary}\label{cor:low_ls}
For any $k\in\{1,\ldots,n\}$ and $m\in\{1,\ldots,M\}$, we have
\begin{align}\nonumber\\[-21pt]
\lim_{\lambda_s\to 0^+} g_k^{(m)} = 0,
\qquad
\lim_{\lambda_s\to 0^+} G_k = 0.\\[-21pt]\nonumber
\end{align}
\begin{proof}
At $k=n$ there are no outside nodes, hence $\beta_n=0$ and the recursion reduces to
\begin{align*}\\[-21pt]
\lim_{\alpha_n \to 0 }  g_n^{(m)}= \lim_{\alpha_n \to 0 }\frac{\alpha_n\pi_m}{\nu_m+\alpha_n}=0.\\[-21pt]
\end{align*}
Now apply \eqref{eq:M_state_gk_final} backward for $k=n-1,n-2,\ldots,1$. Substituting $\alpha_k =  0$, the recursion becomes
\begin{align*}\\[-21pt]
g_k^{(m)}=\frac{\beta_k}{\nu_m+\beta_k}\,g_{k+1}^{(m)},\\[-21pt]
\end{align*}
and the multiplicative factor $\beta_k/(\nu_m+\beta_k)$ is finite. Since $g_n^{(m)}$ approaches $0$, it follows that all preceding $g_k^{(m)}$ approach $0$ as well, hence $G_k$ approaches $0$.
\end{proof}
\end{corollary}
Intuitively, as the network becomes disconnected from the source, all nodes’ version ages drift to infinity, so the zero-age indicator $\mathds{1}_{\{X_{A_k}=0\}}$ forces $g_k^{(m)}=\mathbb{P}(Q=m,X_{A_k}=0)$ to vanish.

Corollary~\ref{cor:high_lambda} characterizes the behavior of $G_k$ when all nodes gossip instantaneously, i.e., in the limit as $\lambda \to \infty$.
\begin{corollary}\label{cor:high_lambda}
For any $\lambda_s>0$ (and hence, $\alpha_k>0$) and any $m \in \{1,\cdots,M\}$,
\begin{align}\nonumber\\[-21pt]
\lim_{\lambda\to\infty}\bigl(g_k^{(m)}-g_{k+1}^{(m)}\bigr)=0,
\qquad k=1,\ldots,n-1,\\[-21pt]\nonumber
\end{align}
and thus $\!g_1^{(m)}\!=\!\cdots\!=\!g_n^{(m)}\!$ in the limit as $\lambda$ grows large.
\end{corollary}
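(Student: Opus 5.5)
We work directly from the backward recursion \eqref{eq:M_state_gk_final}, $g_k^{(m)}=\bigl(\alpha_k\pi_m+\beta_k g_{k+1}^{(m)}\bigr)/(\nu_m+\alpha_k+\beta_k)$ for $k=1,\ldots,n-1$, with the boundary value $g_n^{(m)}=\alpha_n\pi_m/(\nu_m+\alpha_n)$ obtained by setting $\beta_n=0$. Two features of this recursion drive the argument. First, $\lambda$ enters only through $\beta_k=k(n-k)\lambda/(n-1)$, which is strictly positive for $1\le k\le n-1$ and grows linearly in $\lambda$. Second, $\alpha_k$, $\nu_m$ and $\pi_m$ do not depend on $\lambda$. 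In particular $g_n^{(m)}$ is independent of $\lambda$ and lies in $[0,\pi_m]\subset[0,1]$.

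\textbf{Step 1 (uniform boundedness).} We show by backward induction that $0\le g_k^{(m)}\le 1$ for every $k$ and every $\lambda>0$. If $g_{k+1}^{(m)}\in[0,1]$, then the numerator of the recursion is at most $\alpha_k+\beta_k$, which is at most the denominator, so $g_k^{(m)}\in[0,1]$. This also follows directly from $g_k^{(m)}=\mathbb{P}(Q=m,X_{A_k}=0)$. We need this bound so that the perturbation terms below vanish uniformly in $\lambda$.

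\textbf{Step 2 (the difference identity).} Subtracting $g_{k+1}^{(m)}$ from both sides of the recursion gives
\begin{equation*}
g_k^{(m)}-g_{k+1}^{(m)}=\frac{\alpha_k\pi_m-(\nu_m+\alpha_k)\,g_{k+1}^{(m)}}{\nu_m+\alpha_k+\beta_k}.
\end{equation*}
By Step 1, the numerator is bounded in absolute value by $\alpha_k\pi_m+\nu_m+\alpha_k$, a constant independent of $\lambda$. The denominator is at least $\beta_k$, which tends to infinity as $\lambda$ grows because $k(n-k)>0$ for $1\le k\le n-1$. Hence $\bigl|g_k^{(m)}-g_{k+1}^{(m)}\bigr|\le(\nu_m+2\alpha_k)(n-1)/\bigl(k(n-k)\lambda\bigr)\to0$ for each $k\in\{1,\ldots,n-1\}$.

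\textbf{Step 3 (telescoping).} Writing $g_k^{(m)}-g_n^{(m)}=\sum_{\ell=k}^{n-1}\bigl(g_\ell^{(m)}-g_{\ell+1}^{(m)}\bigr)$, a finite sum of $n-k$ terms each tending to zero, gives $g_k^{(m)}\to g_n^{(m)}=\alpha_n\pi_m/(\nu_m+\alpha_n)$ for all $k$. So all $g_k^{(m)}$ coincide in the limit, and since $g_n^{(m)}$ does not depend on $\lambda$, the limits exist.

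The step most likely to need care is justifying that the numerator in Step 2 stays bounded while $\lambda$ varies: a priori the $g_{k+1}^{(m)}$ themselves depend on $\lambda$. Step 1 handles this, either through the probabilistic interpretation or through the inductive argument. The hypothesis $\lambda_s>0$ is used only to fix a finite nonzero limit value; the difference estimate holds for any $\lambda_s\ge0$.
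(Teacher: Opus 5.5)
Your proposal is correct and follows essentially the same route as the paper: subtract $g_{k+1}^{(m)}$ from the recursion \eqref{eq:M_state_gk_final}, and bound the numerator of the resulting difference by a $\lambda$-independent constant using $0\le g_{k+1}^{(m)}\le 1$. The paper phrases this through the convex weights $w_k^{(m)},v_k^{(m)},r_k^{(m)}$, which gives the slightly tighter bound $(\alpha_k+\nu_m)/(\nu_m+\alpha_k+\beta_k)$, and then, like you, lets $\beta_k\to\infty$. Your inductive proof of the bound and your telescoping step, which identifies the common limit $g_n^{(m)}=\alpha_n\pi_m/(\nu_m+\alpha_n)$, make explicit what the paper only asserts.
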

\begin{proof}
From \eqref{eq:M_state_gk_final}, we have $g_k^{(m)}=\frac{\alpha_k\pi_m+\beta_k g_{k+1}^{(m)}}{\nu_m+\alpha_k+\beta_k}.$ Next, we define
$D_k^{(m)}\!\triangleq\!\nu_m+\alpha_k+\beta_k,$ $
w_k^{(m)}\!\triangleq\!\frac{\alpha_k}{D_k^{(m)}},$ $
v_k^{(m)}\!\triangleq\!\frac{\beta_k}{D_k^{(m)}},$ and $
r_k^{(m)}\!\triangleq\!\frac{\nu_m}{D_k^{(m)}}.$
Then, we can write $g_k^{(m)}$ as
\begin{align*}\\[-21pt]
g_k^{(m)}=w_k^{(m)}\pi_m+v_k^{(m)}g_{k+1}^{(m)}+r_k^{(m)}\cdot 0,\\[-21pt]
\end{align*}
By using $w_k^{(m)}+v_k^{(m)}+r_k^{(m)}=1,$ we obtain
\begin{align*}\\[-21pt]
g_k^{(m)}-g_{k+1}^{(m)}
=
w_k^{(m)}\!\left(\pi_m-g_{k+1}^{(m)}\right)-r_k^{(m)}g_{k+1}^{(m)}.\\[-21pt]
\end{align*}
Using $0\le g_{k+1}^{(m)}\le 1$ and $0\le \pi_m\le 1$, we get 
\begin{align*}\\[-21pt]
\left|g_k^{(m)}-g_{k+1}^{(m)}\right|
\le w_k^{(m)}+r_k^{(m)}
= \frac{\alpha_k+\nu_m}{\nu_m+\alpha_k+\beta_k}.\\[-21pt]
\end{align*}
As $\lambda$, and hence $\beta_k$, grows arbitrarily large, the RHS tends to $0$ and thus we have $\lim_{\lambda\to\infty}\bigl(g_k^{(m)}-g_{k+1}^{(m)}\bigr)=0.$
\end{proof}

Intuitively, when gossiping is arbitrarily fast, the distinction between a subset $A_k$ and its super-sets disappears, so the network behaves (from the perspective of freshness) like a single node, which is consistent with the findings of the previous sections.

Finally, we investigate the case when gossiping within the network ceases, and only source pushes remain.
\begin{corollary}\label{cor:gn_closed_form}
Fix $\lambda_s>0$ and consider $\lambda=0$ (so $\beta_k=0$ for all $k$). Then, for every
$k\in\{1,\ldots,n\}$ and $m\in\{1,\ldots,M\}$, we have
\begin{align} \nonumber\\[-21pt]
\lim_{\lambda \to 0}g_k^{(m)}
=
\mathbb{P}\!\left(Q=m,\;X_{A_k}=0\right)
=
\frac{\alpha_k\,\pi_m}{\nu_m+\alpha_k}.
\label{eq:gk_no_gossip}\\[-21pt]\nonumber
\end{align}
Consequently,
\begin{align}\nonumber\\[-21pt]
\lim_{\lambda \to 0}G_k
\!=\!
\sum_{m=1}^M \lim_{\lambda \to 0}c_k^{(m)}\,g_k^{(m)}
\!=\!
\sum_{m=1}^M \frac{\alpha_k\,\pi_m}{\nu_m+\alpha_k}\lim_{\lambda \to 0}c_k^{(m)}.\!\!
\label{eq:Gk_no_gossip}\\[-21pt]\nonumber
\end{align}
\end{corollary}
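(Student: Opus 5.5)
The plan is to specialize the backward recursion \eqref{eq:M_state_gk_final} of Theorem~\ref{thm:acc_fresh_thm} to $\beta_k=0$. The gossip terms in that recursion then vanish, and it decouples across $k$. Since $\beta_k=k(n-k)\lambda/(n-1)$ is linear in $\lambda$, for each $k\le n-1$ the map $\lambda\mapsto g_k^{(m)}$ given by the recursion is a composition of rational functions whose denominators $\nu_m+\alpha_k+\beta_k\ge\alpha_k>0$ stay bounded away from zero. The recursion is therefore continuous in $\lambda$ at $\lambda=0$, provided every $g_{k+1}^{(m)}$ remains in $[0,1]$, which holds because each is a probability. Letting $\lambda\to0$ gives
\begin{align*}
\lim_{\lambda\to0} g_k^{(m)}=\frac{\alpha_k\pi_m+0\cdot g_{k+1}^{(m)}}{\nu_m+\alpha_k}=\frac{\alpha_k\pi_m}{\nu_m+\alpha_k},\qquad k=1,\ldots,n-1.
\end{align*}
The boundary case $k=n$ must be handled separately, because \eqref{eq:M_state_gk_final} is stated only for $k\le n-1$. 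Here I would redo the SHS balance for the test function $\mathds{1}_{\{Q=m,X_{A_n}=0\}}$ exactly as in the proof of Theorem~\ref{thm:acc_fresh_thm}. There are no outsider gossip events, since $\beta_n=0$ identically. A source transition out of state $m$ happens at rate $\nu_m$ and kills the indicator. A push to any node in $A_n$ happens at total rate $\alpha_n=\lambda_s$ and sets the indicator to $\mathds{1}_{\{Q=m\}}$, whose steady-state expectation is $\pi_m$.

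One subtlety is a transition into $m$ from another state. This increments every version age, so $X_{A_k}\ge1$ afterward and the indicator stays zero; these transitions contribute nothing. The resulting balance is $-\nu_m g_n^{(m)}+\alpha_n(\pi_m-g_n^{(m)})=0$, which gives the claimed form at $k=n$, in agreement with the base case already used in the proof of Corollary~\ref{cor:low_ls}.

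As a consistency check I would also give a direct probabilistic argument. With $\lambda=0$, the event $\{X_{A_k}=0\}$ means that the last event among {pushes to $A_k$ (rate $\alpha_k$), source transitions} was a push. Conditional on $Q=m$, looking backward in time under stationarity, the competing rates are $\alpha_k$ and $\nu_m$. This yields $\pi_m\alpha_k/(\alpha_k+\nu_m)$.

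For the consequence \eqref{eq:Gk_no_gossip}, I would take limits in $G_k=\sum_m c_k^{(m)}g_k^{(m)}$ from Theorem~\ref{thm:acc_fresh_thm}. This requires that $\lim_{\lambda\to0}c_k^{(m)}$ exists. Corollary~\ref{cor:lambda_zero_fixed_ls} gives this in the binary case. For general $M$, it follows from continuity in $\lambda$ of the stationary distributions in Algorithm~\ref{alg:backward_Qk_pi_fk}: the generators $Q_k$ are affine in $\beta_k$, and each is irreducible on the relevant class because $\alpha_k>0$, so $\boldsymbol\pi^{(k)}$ depends continuously on the generator. The limit of the finite sum of products is then the sum of the products of the limits.

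The main obstacle I expect is this continuity justification, especially for the general-$M$ accuracies $c_k^{(m)}$, where one must argue that the stationary distribution of the joint CTMC is unique and varies continuously as $\beta_k\to0$. The $g_k^{(m)}$ part itself is routine.
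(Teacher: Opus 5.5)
Your proposal is correct and follows the same route as the paper. The paper sets $\beta_k=0$ (equivalently lets $\lambda\to0$) in the recursion \eqref{eq:M_state_gk_final}, reads off $\alpha_k\pi_m/(\nu_m+\alpha_k)$, and substitutes this into $G_k=\sum_m c_k^{(m)}g_k^{(m)}$; your separate SHS balance at $k=n$, the continuity argument, and the existence of $\lim_{\lambda\to0}c_k^{(m)}$ supply details the paper leaves implicit.

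One small imprecision: $Q_k$ is not literally affine in $\beta_k$ as a function of $\lambda$, since $p^{(k+1)}_{r\mid q}$ also depends on $\lambda$. However, these conditional probabilities lie in $[0,1]$ and are continuous in $\lambda$ by backward induction from $\boldsymbol{\pi}^{(n)}$, so the outsider-gossip entries still vanish as $\lambda\to0$ and your continuity conclusion stands.
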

\begin{proof}
With $\lambda=0$, no gossip transitions occur. As $\lambda$ gets arbitrarily close to 0,~\eqref{eq:M_state_gk_final} gives
\begin{align*}\\[-21pt]
\lim_{\lambda_k \to 0} g_k^{(m)}
=
\frac{\alpha_k\pi_m}{\nu_m+\alpha_k},
\qquad k=1,\ldots,n,\\[-21pt]
\end{align*}
which proves \eqref{eq:gk_no_gossip}. Plugging this expression into $G_k=\sum_{m=1}^M c_k^{(m)}g_k^{(m)}$
yields \eqref{eq:Gk_no_gossip}.  Since a closed form solution for $c_k^{(m)}$ does not exist for the general $M$-state information source, the limit $\lim_{\lambda \to 0}c_k^{(m)}$ can be evaluated by setting an arbitrarily low gossip rate $\lambda$ in Algorithm~\ref{alg:backward_Qk_pi_fk}, and solving the recursion. If the source CTMC is binary, a closed form solution for $\lim_{\lambda \to 0}c_k^{(m)}$ exists and is provided in Corollary~\ref{cor:lambda_zero_fixed_ls}.
\end{proof}
When $\lambda=0$, nodes in $A_k$ can become fresh only via direct source pushes, since there is no peer-to-peer spreading to ``import'' freshness from outside the set. Thus, $g_k^{(m)}=\mathbb{P}(Q=m, X_{A_k}=0)$ equals the steady-state fraction of time the source is in mode $m$ (namely, $\pi_m$) times the probability that a push arrives before the source leaves mode $m$, which is $\alpha_k/(\nu_m+\alpha_k)$.

In the next section, we present our numerical results, and compare the theoretical results derived in Sections~\ref{sect:binary}--~\ref{sect:usefulness} against simulations.

\section{Numerical Results and Asymptotic Analysis Insights}\label{sect:numerical}

In this section, we provide numerical results for a) binary information source, b) multi-state information source, and c) comparison of information accuracy due to freshness and source inversions. 

\subsection{Results for Binary Information Source}
In our first experimental result, we simulate the gossip network described in Section~\ref{sect:model}, for a total of $2,250,000$ time-steps, with $n=10$ nodes and $M=2$ source-states, and sweep over the values of $\lambda$ and $\lambda_s$ respectively, while keeping the other variables constant. The source generator matrix is chosen as 
\begin{align*}\nonumber\\[-21pt]
  \mathbf{Q} = \begin{bmatrix}
-0.25 & 0.25 \\
0.75 & -0.75
\end{bmatrix}.  \nonumber\\[-21pt]
\end{align*} 
In Fig.~\ref{fig:asymmetric_binary_graphs}(a), we provide how freshness-based accuracy of a node ($f_1)$ and mode-tagged freshness-based
accuracy, i.e., $f_1^{(m)}$, change with respect to $\lambda_s$. In Fig.~\ref{fig:asymmetric_binary_graphs}(b), we provide how freshness-based accuracy, i.e., $f_k$, changes with respect to $\lambda$ for $k \in \{1,3,5,10\}$. We note that $f_1$ also gives the average-accuracy, $c_k = c_1$, for any arbitrary subset with the cardinality $k$.  The parameter that is not swept over is fixed to 1 for each plot; that is, $\lambda = 1$ for Fig.~\ref{fig:asymmetric_binary_graphs}(a) and $\lambda_s=1$ for Fig.~\ref{fig:asymmetric_binary_graphs}(b).
\begin{figure}[t]
    \centering
    \subfloat[]{%
        \includegraphics[width=0.485\columnwidth]{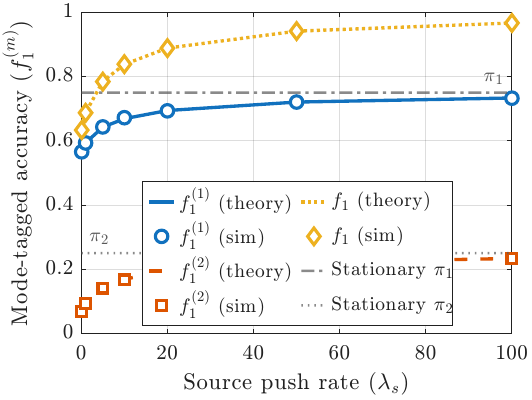}%
        \label{fig:asymmetric_lambda_s}%
    }\hfill
    \subfloat[]{%
        \includegraphics[width=0.485\columnwidth]{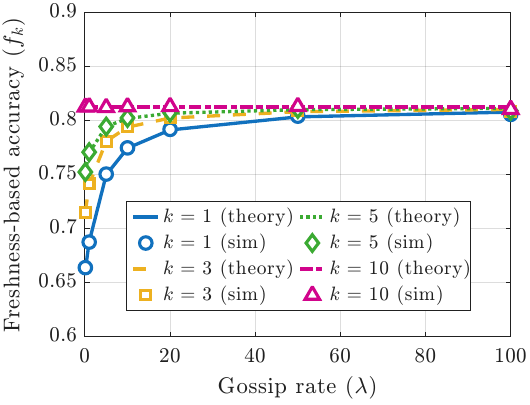}%
        \label{fig:asymmetric_lambda}%
    }\vspace{-0.2cm}
    \caption{Freshness-based accuracy vs.\ (a) $\lambda_s$ for $k \in \{1,3,5,10\}$ and (b) $\lambda$ for $k=1$.}
    \label{fig:asymmetric_binary_graphs}
    \vspace{-0.5cm}
\end{figure}
To understand Fig.~\ref{fig:asymmetric_binary_graphs}(a), we focus our attention on the recursive $\mathbf{f}_k $ derivation provided in (\ref{eq:fk_matrix_recursion}) with $W_k$ and $\mathbf{v}_k$ expressions in (\ref{eq:A_k_B_k_def}). As $\lambda_s$ grows large, $\alpha_k$ also increases proportionally which makes the matrix $W_k$ approach $\alpha_kI_2$ and vector $\mathbf{v}_k$ gets closer to $\alpha_k \boldsymbol{\pi}$. Since the final component of~(\ref{eq:fk_matrix_recursion}), that is, $V_k \mathbf{f}_{k+1}$ with $ V_k =\beta_k I_2$,  only consists of finite elements, the vector $\mathbf{f}_k $ converges to $ \boldsymbol{\pi}$ as $\lambda_s$ grows. Intuitively, if the source always pushes its current information into the network, the nodes will always carry the fresh and accurate information, and therefore $f_{k}^{(1)}$
and $f_{k}^{(2)}$ will take on the stationary distributions of the source for their respective mode tags. Furthermore, since $f_k = f_k^{(1)} +f_k^{(2)}$, in the limit, $\lim_{\alpha_k \to \infty}f_k = \pi_1 + \pi_2 = 1$, meaning the freshest node of any subset will always be accurate. This trend is observed in Fig.~\ref{fig:asymmetric_binary_graphs}(a) and is proved in Corollary~\ref{cor:ls_infty_fixed_lambda}. As $\lambda$ grows large, and thus $\beta_k$ grows proportionally, we have $\dfrac{1}{\beta_k}W_k$ converging to $I_2$ and $\dfrac{1}{\beta_k}\mathbf{v}_k$ getting closer to $0$. As a result, we have $
\lim_{\beta_k \to \infty} \mathbf{f}_k = \mathbf{f}_{k+1}.$ Intuitively, if all nodes communicate with one another at every time instant, the entire network behaves as a singular node. Such a trend can be seen in Fig.~\ref{fig:asymmetric_binary_graphs}(b) and and is proved in Corollary~\ref{cor:lambda_infty_fixed_ls}.

From Figs.~\ref{fig:asymmetric_binary_graphs}(a) and~\ref{fig:asymmetric_binary_graphs}(b), it can be seen that increasing $\lambda_s$ is much more effective for raising the average accuracy than increasing $\lambda$. The reason for this is that as $\lambda$ increases while keeping $\lambda_s$ constant, the nodes are able to obtain the freshest information in the network, but the freshest information in the network will depend on the source's rate $\lambda_s$. On the other hand, if we increase $\lambda_s$ while keeping $\lambda$ constant, every node in the network is still able to obtain the most up-to-date information from the source; and thus, the nodes’ accuracy converges to $\!1$.  

Our second simulation results shown in Figs.~\ref{fig:f_and_c_symmetric_plots}(a) and ~\ref{fig:f_and_c_symmetric_plots}(b) illustrate the behavior of the average accuracy under a symmetric binary source. Similar to the earlier simulation result, the parameter that is not swept is fixed to 1, that is, $\lambda = 1$ for Fig.~\ref{fig:f_and_c_symmetric_plots}(a) and $\lambda_s = 1$ for Fig.~\ref{fig:f_and_c_symmetric_plots}(b), respectively. These figures verify that the steady-state assumption of all nodes behaving identically, and the average accuracy being equal to that of the one-node subset for all subset cardinalities $k\in\{1,2,\cdots,n\}$. Here, we also see that $\lambda_s$ is much more effective at increasing average accuracy compared to $\lambda$.

\begin{figure}[t]
    \centering
    \subfloat[]{%
        \includegraphics[width=0.485\columnwidth]{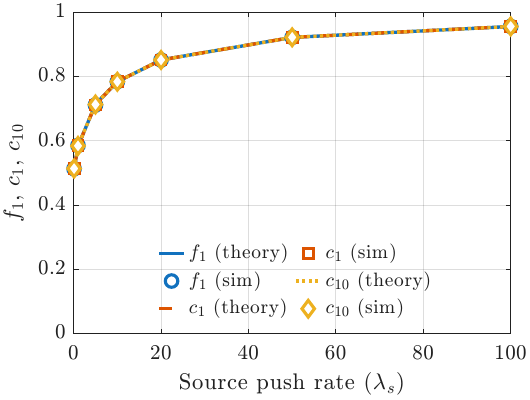}%
        \label{fig:symmetric_vs_lambda_s}%
    }\hfill
    \subfloat[]{%
        \includegraphics[width=0.485\columnwidth]{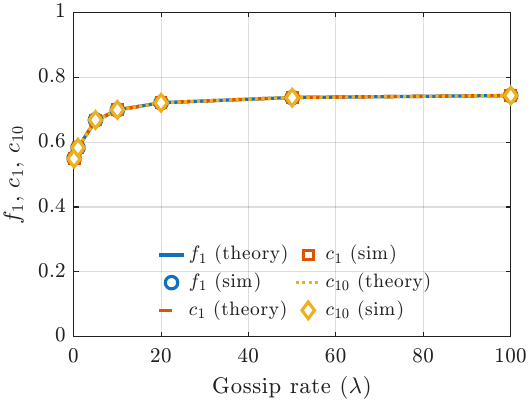}%
        \label{fig:symmetric_vs_lambda}%
    }
    \caption{The freshness-based and average accuracy of a node ($f_{1,\text{sym}}$ and $c_{1,\text{sym}}$, respectively) for a symmetric CTMC versus $\lambda_s$ and $\lambda$.
    \vspace{-5mm}}
    \label{fig:f_and_c_symmetric_plots}
\end{figure}
\subsection{Results for General Multi-State Information Source}
In our third simulation results with the general information source model, we consider a source that has a $M=4$-state CTMC with its generator matrix given by
\begin{align}\nonumber\\[-21pt]
\mathbf{Q}=
\begin{bmatrix}
-1.00 & 0.25 & 0.15 & 0.60 \\
0.40 & -0.85 & 0.30 & 0.15 \\
0.20 & 0.85 & -1.15 & 0.10 \\
0.15 & 0.25 & 0.45 & -0.85
\end{bmatrix}.
\label{eq:Q_4x4_matrix}\\[-21pt]\nonumber
\end{align}
As with the binary CTMC case, the network size is taken to be $n=10$, and a total of $2,250,000$ time-steps were used to simulate steady-state conditions.

\begin{figure}[t]
    \centering
    \subfloat[]{%
        \includegraphics[width=0.485\columnwidth]{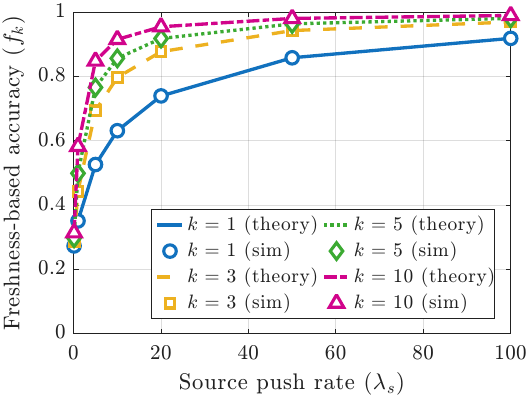}%
        \label{fig:n_state_lambda_s}%
    }\hfill
    \subfloat[]{%
        \includegraphics[width=0.485\columnwidth]{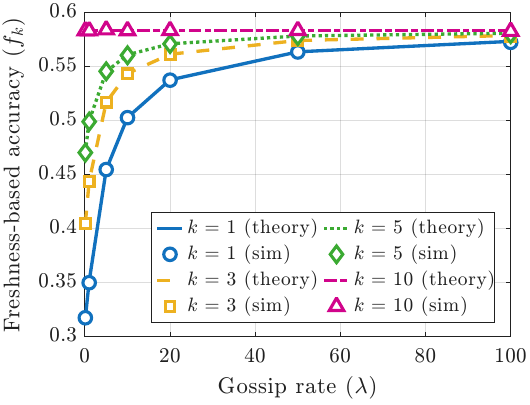}%
        \label{fig:n_state_lambda}%
    }
     \vspace{-0.25cm}
    \caption{Freshness-based accuracy for $k \in \{1,3,5,10\}$ (a) versus $\lambda_s$, and (b) versus $\lambda$.}
    \label{fig:n_State_fk}
    \vspace{-0.5cm}
\end{figure}

Figs.~\ref{fig:n_State_fk}(a) and ~\ref{fig:n_State_fk}(b) depict the simulation results for the general multi-state information source with $4$ states. As $\lambda$ increases, the average accuracy for each $k$ value converges to the same value, highlighting how the network behaves as a single node for high gossip rates. A representative example for this case would be how the average accuracy for $k=n$ does not change with $\lambda$, and all other $f_k$ values for $k \neq n$ approach $f_{n}$. As we increase $\lambda_s$, every node in the network is able to get the freshest information, and as a result, their accuracy get closer to 1 as shown in Fig.~\ref{fig:n_State_fk}(a). Thus, the observation made for the $M=2$ state source is also validated for a general $M$-state source.

\begin{figure}[t]
  \centering
  \includegraphics[width=0.55\linewidth]{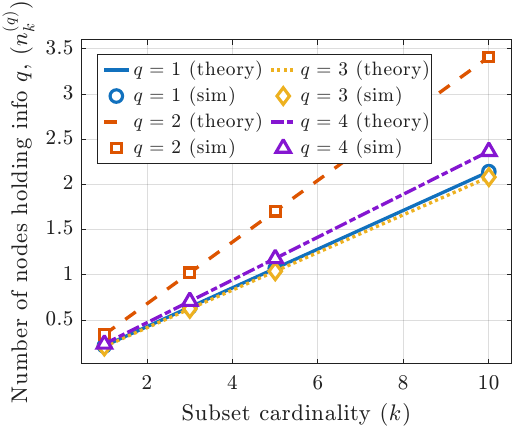}
  \vspace{-0.35cm}
  \caption{Number of nodes holding information $q$ vs. set cardinality $k$ under a 4-state CTMC source}
  \label{fig:nk_vs_k}
  \vspace{-0.45cm}
\end{figure}

\begin{figure}[t]
    \centering
    \subfloat[]{%
        \includegraphics[width=0.485\columnwidth]{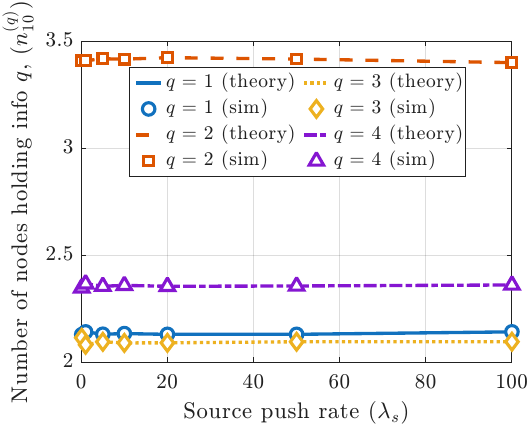}%
        \label{fig:nk_vs_lambda_s}%
    }\hfill
    \subfloat[]{%
        \includegraphics[width=0.485\columnwidth]{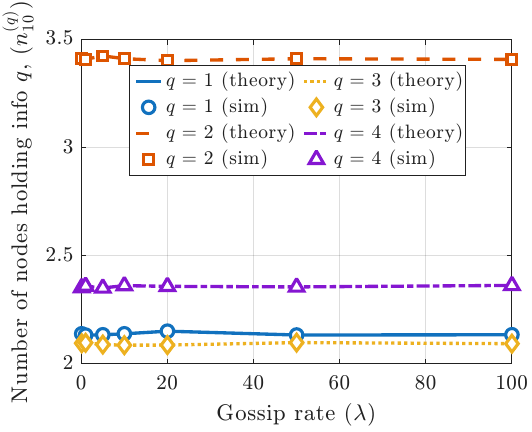}%
        \label{fig:nk_vs_lambda}%
    } \vspace{-0.25cm}
    \caption{Number of nodes holding information $q$, $n_k^{(q)}$, vs.\ (a) source push rate $\lambda_s$ and (b) gossip rate $\lambda$.}
    \label{fig:nk_figure}
    \vspace{-0.35cm}
\end{figure}

We also provide simulation results for the expressions obtained in Lemma~\ref{lem:number_of_nodes},  which states that the expected fraction of nodes holding information $q$ any subset of the network with cardinality $k$ is given by $\pi_qk$ under steady--state conditions. For sweeps against subset cardinality $k$ shown in Fig.~\ref{fig:nk_vs_k}, the source push rate and the gossip rate were fixed to 1, $\lambda_s=\lambda=1$. For $n_k^{(q)}$ sweeps against the source push rate and the gossip rate, the parameter not being swept was fixed to 1, i.e., $\lambda=1$, for Fig.~\ref{fig:nk_figure}(a) and $\lambda_s = 1$ for Fig.~\ref{fig:nk_figure}(b). As expected, the number of nodes holding information $q$ increases linearly with subset cardinality $k$, shown in Fig.~\ref{fig:nk_vs_k}. Furthermore, Figs.~\ref{fig:nk_figure}(a) and (b) show how the value for the number of nodes in the entire network holding information $q$, that is $n_{n}^{(q)}$ stays constant as both $\lambda_s$ and $\lambda$ vary, respectively.
\subsection{Results for the Information Accuracy due to  Freshness}\label{sect:asymp_gk}
\begin{figure}[t]
    \centering
    \subfloat[]{%
        \includegraphics[width=0.485\columnwidth]{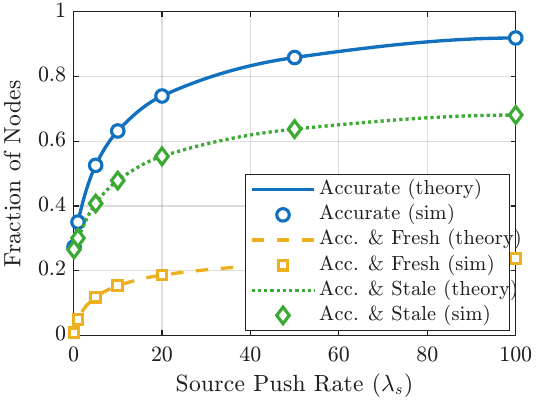}%
        \label{fig:acc_fresh_lambda_s}%
    }\hfill
    \subfloat[]{%
        \includegraphics[width=0.485\columnwidth]{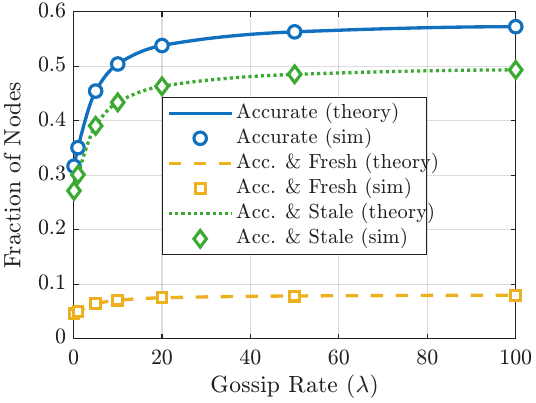}%
        \label{fig:acc_fresh_lambda}%
    }\vspace{-0.25cm}
    \caption{Fraction of nodes (in the entire network) holding fresh and accurate information vs.\ (a) source push rate $\lambda_s$ and (b) gossip rate $\lambda$.}
    \label{fig:acc_fresh}
    \vspace{-0.45cm}
\end{figure}
In our final simulation result, we consider the fraction of nodes that have accurate and \emph{strictly fresh} information ($G_{k=n}$) and observe the impact of near-instantaneous gossip rates on accuracy. As with earlier simulation results, the network size was taken to be $n=10$, and the parameter not being swept, that is, $\lambda$ for Fig.~\ref{fig:acc_fresh}(a) and $\lambda_s$ for Fig.~\ref{fig:acc_fresh}(b), is fixed to 1. Note that $G_k \neq G_j, \; k\neq j, \; k,j \in \mathcal{N}$, so we cannot pick the $k$ value arbitrarily. Since we want to observe the behavior of the entire network, we choose $k=n$ for both plots. A large gap can be seen between Fig.~\ref{fig:acc_fresh}(a) and Fig.~\ref{fig:acc_fresh}(b) for the fraction of nodes that are accurate and fresh. Even when the gossip rate is very fast, only $\approx 0.08$ of nodes are both accurate and strictly fresh, whereas the total accurate (fresh or stale) node fraction is much larger ($\approx 0.57$). Thus, high $\lambda$ mainly synchronizes the nodes to the \emph{same} information, but does not guarantee that this information is \emph{current}; meaning freshness is ultimately bottle-necked by the source push rate $\lambda_s$. This can also be observed by the fact that the \emph{accurate and stale} node fraction noticeably increases as the gossip rate grows larger.

\section{Conclusion and Future Work}\label{sect:discuss_conc}
In this paper, we characterized information accuracy in \emph{timeliness-based} fully connected gossip networks, where nodes accept packets solely based on information freshness (VAoI). Under this rule, nodes may hold the \emph{freshest} packet while still disagreeing with the source state. To quantify this mismatch, we introduced \emph{average accuracy} and \emph{freshness-based accuracy}, and develop an SHS-based steady-state analysis for binary CTMC sources. We derived a matrix-valued backward recursion that characterizes \emph{mode-tagged} freshness-based accuracy with an explicit boundary solution, enabling computation for all subset sizes. We  extended this framework to general $M$-state sources via a joint-CTMC approach, quantifying the contributions of direct source pushes and peer-to-peer gossip. Numerical and asymptotic results showed that increasing the source push rate $\lambda_s$ is significantly more effective than increasing the gossip rate $\lambda$ in improving steady-state accuracy. Future work includes extensions to general network topologies, multiple source processes, optimal rate allocation, and generalized acceptance rules for broader information dissemination settings.

\bibliographystyle{IEEEtran}
\bibliography{citations}

@article{He2020,
   author = {S He and H Shin and S Xu and A Tsourdos},
   journal = {Information Fusion},
   month = {February},
   pages = {21-43},
   publisher = {Elsevier},
   title = {Distributed estimation over a low-cost sensor network: A review of state-of-the-art},
   volume = {54},
   year = {2020}
}

@article{Villalba2009,
   author = {L Villalba and A Orozco and A Cabrera and C Abbas},
   issue = {11},
   journal = {Sensors},
   month = {October},
   pages = {8399-8421},
   publisher = {Molecular Diversity Preservation International (MDPI)},
   title = {Routing protocols in wireless sensor networks},
   volume = {9},
   year = {2009}
}

@article{Akkaya2005,
   author = {K Akkaya and M Younis},
   issue = {3},
   journal = {Ad Hoc Networks},
   month = {May},
   pages = {325-349},
   publisher = {Elsevier},
   title = {A survey on routing protocols for wireless sensor networks},
   volume = {3},
   year = {2005}
}

@INPROCEEDINGS{Abolhassani2021,
  author={Abolhassani, B. and Tadrous, J. and Eryilmaz, A. and Yeh, E.},
  booktitle={IEEE INFOCOM 2021 - IEEE Conference on Computer Communications}, 
  title={Fresh Caching for Dynamic Content},
month={May},
  year={2021},
  volume={},
  number={},
  pages={1-10},
  doi={10.1109/INFOCOM42981.2021.9488731}}

@Article{Abd-Elmagid2023,
AUTHOR = {Abd-Elmagid, Mohamed A. and Dhillon, Harpreet S.},
TITLE = {Distribution of the Age of Gossip in Networks},
JOURNAL = {Entropy},
VOLUME = {25},
month={February},
YEAR = {2023},
NUMBER = {2},
ARTICLE-NUMBER = {364},
PubMedID = {36832729},
ISSN = {1099-4300}
}

@INPROCEEDINGS{Yates2021_SPAWC,
  author={Yates, R. D.},
  booktitle={2021 IEEE 22nd International Workshop on Signal Processing Advances in Wireless Communications (SPAWC)}, 
  title={Timely Gossip},
month = {September},
  year={2021},
  volume={},
  number={},
  pages={331-335},
  doi={10.1109/SPAWC51858.2021.9593163}}

@article{Yates2019,
   author = {R. D. Yates and S. K. Kaul},
   issue = {3},
   journal = {IEEE Transactions on Information Theory},
   month = {March},
   pages = {1807-1827},
   title = {The Age of Information: Real-Time Status Updating by Multiple Sources},
   volume = {65},
   year = {2019}
}

@article{Hespanha2007,
  title={Modelling and analysis of stochastic hybrid systems},
  author={Hespanha, J. P.},
  journal={IEE Proceedings-Control Theory and Applications},
  volume={153},
  number={5},
  pages={520--535},
  month={September},
  year={2006},
  publisher={IET}
}

@inproceedings{v2x_paper,
  title={A spatiotemporal analysis of age of information at {V2X}-enabled intersection},
  author={Wang, Yaying and Tan, Guoping and Zhou, Siyuan and Yang, Dengsong and Ni, Baili},
  booktitle={Proc. IEEE HPCC/DSS/SmartCity/DependSys},
  pages={901--907},
  month= {December},
  year={2021},
  organization={IEEE}
}

@INPROCEEDINGS{Kaswan2023b,
  author={Kaswan, P. and Ulukus, S.},
  booktitle={GLOBECOM 2023 - 2023 IEEE Global Communications Conference}, 
  title={Information Mutation and Spread of Misinformation in Timely Gossip Networks}, 
month = {December},
  year={2023},
  volume={},
  number={},
  pages={3560-3565},
  doi={10.1109/GLOBECOM54140.2023.10437596}}

@article{cached_harvest,
  author  = {E. Delfani and N. Pappas},
  title   = {Version Age-Optimal Cached Status Updates in a Gossiping Network With Energy Harvesting Sensor},
  journal = {IEEE Transactions on Communications},
  month={April},
  year    = {2025},
  volume  = {73},
  number  = {4},
  pages   = {2344--2360},
  doi     = {10.1109/TCOMM.2024.3471969}
}

@INPROCEEDINGS{Srivastava2023_grid,
  author={Srivastava, A. and Ulukus, S.},
  booktitle={2023 59th Annual Allerton Conference on Communication, Control, and Computing (Allerton)}, 
  title={Age of Gossip on a Grid},
month = {September},
  year={2023},
  volume={},
  number={},
  pages={1-8},
  doi={10.1109/Allerton58177.2023.10313495}}

@ARTICLE{Buyukates2022,
  author={Buyukates, B. and Bastopcu, M. and Ulukus, S.},
  journal={IEEE Journal on Selected Areas in Information Theory}, 
  title={Version Age of Information in Clustered Gossip Networks},
 month={March},
  year={2022},
  volume={3},
  number={1},
  pages={85-97},
  doi={10.1109/JSAIT.2022.3159745}}

@INPROCEEDINGS{Mitra2023,
  author={Mitra, P. and Ulukus, S.},
  booktitle={IEEE INFOCOM 2023 - IEEE Conference on Computer Communications Workshops (INFOCOM WKSHPS)}, 
  title={Timely Opportunistic Gossiping in Dense Networks},
month = {May},
  year={2023},
  volume={},
  number={},
  pages={1-6},
  doi={10.1109/INFOCOMWKSHPS57453.2023.10225855}}

@article{Sanghavi2007,
   author = {S. Sanghavi and B. Hajek and L. Massoulie},
   issue = {12},
   journal = {IEEE Transactions on Information Theory},
   month = {December},
   pages = {4640-4654},
   title = {Gossiping With Multiple Messages},
   volume = {53},
   year = {2007}
}

@article{Deb2006,
   author = {S. Deb and M. Medard and C. Choute},
   issue = {6},
   journal = {IEEE Transactions on Information Theory},
   month = {June},
   pages = {2486-2507},
   title = {Algebraic gossip: a network coding approach to optimal multiple rumor mongering},
   volume = {52},
   year = {2006}
}

@article{Shah2008,
   author = {D. Shah},
   issue = {1},
   journal = {Foundations and Trends in Networking},
   pages = {1-125},
   title = {Gossip Algorithms},
   volume = {3},
   year = {2008}
}

@inproceedings{Boyd2005,
   author = {S. Boyd and A. Ghosh and B. Prabhakar and D. Shah},
   booktitle = {IEEE Annual Joint Conference: Infocom, IEEE Computer and Communications Societies},
   month = {March},
   title = {Gossip algorithms: Design, analysis and applications},
   year = {2005}
}

@inproceedings{Kaul2011,
   author = {S K Kaul and M Gruteser and V Rai and J Kenney},
   booktitle = {IEEE Infocom},
   month = {March},
   title = {Minimizing age of information in vehicular networks},
   year = {2011}
}

@article{Yates2020,
   author = {R. D. Yates and Y. Sun and D. R. Brown and S. K. Kaul and E. Modiano and S. Ulukus},
   issue = {5},
   journal = {IEEE Journal on Selected Areas in Communication},
   month = {May},
   pages = {1183-1210},
   title = {Age of Information: An Introduction and Survey},
   volume = {39},
   year = {2020}
}

@book{Sun2022,
   author = {Y. Sun and I. Kadota and R. Talak and E. Modiano},
   publisher = {Springer Nature},
   title = {Age of information: A new metric for information freshness},
   year = {2022}
}

@ARTICLE{Kaswan_survey,
  author={Kaswan, P. and Mitra, P. and Srivastava, A. and Ulukus, S.},
  journal={IEEE Transactions on Communications}, 
  title={Age of Information in Gossip Networks: A Friendly Introduction and Literature Survey}, 
month={August},
  year={2025},
  volume={73},
  number={8},
  pages={6200-6220},
  doi={10.1109/TCOMM.2024.3522043}}

@article{luo2024exploitingdatasignificanceremote,
  title={Exploiting Data Significance in Remote Estimation of Discrete-State {Markov} Sources}, 
      author={Luo, J. and Pappas, N.},
  journal={Available on arXiv:2406.18270},
      year={2024},
    month= {June}
}

@INPROCEEDINGS{Kaswan_timestomp,
  author={Kaswan, P. and Ulukus, S.},
  booktitle={2022 IEEE Information Theory Workshop (ITW)}, 
  title={Susceptibility of Age of Gossip to Timestomping}, 
month={November},
  year={2022},
  volume={},
  number={},
  pages={398-403},
  doi={10.1109/ITW54588.2022.9965757}}

@INPROCEEDINGS{Yates_Age_of_Gossip,
  author={Yates, R. D.},
  booktitle={2021 IEEE International Symposium on Information Theory (ISIT)}, 
  title={The Age of Gossip in Networks}, 
month ={July},
  year={2021},
  volume={},
  number={},
  pages={2984-2989},
  doi={10.1109/ISIT45174.2021.9517796}}

@ARTICLE{yates_aoi_moments,
  author={Yates, Roy D.},
  journal={IEEE Transactions on Information Theory}, 
  title={The Age of Information in Networks: Moments, Distributions, and Sampling}, 
  month={September},
  year={2020},
  volume={66},
  number={9},
  pages={5712-5728},
  doi={10.1109/TIT.2020.2998100}}

@ARTICLE{Maatouk2020,
  author={Maatouk, A. and Kriouile, S. and Assaad, M. and Ephremides, A.},
  journal={IEEE/ACM Transactions on Networking}, 
  title={The Age of Incorrect Information: A New Performance Metric for Status Updates}, 
  year={2020},
month = {October},
  volume={28},
  number={5},
  pages={2215-2228},
  doi={10.1109/TNET.2020.3005549}}

@article{Saurav2025,
  title={Monitoring State Transitions in {Markovian} Systems with Sampling Cost}, 
      author={Saurav, K. and Shroff,  N. B. and  Liang, Y.},
  journal={Available on arXiv:2510.22327},
      year={2025},
    month= {October}
}

@ARTICLE{Salimnejad2025,
  author={Salimnejad, M. and Kountouris, M. and Ephremides, A. and Pappas, N.},
  journal={IEEE Transactions on Communications}, 
  title={Age of Information Versions: A Semantic View of {Markov} Source Monitoring},
month= {December},
  year={2025},
  volume={73},
  number={12},
  pages={14486-14502},
  doi={10.1109/TCOMM.2025.3616209}}

@article{Luo2025,
  title={On the role of age and semantics of information in remote estimation of {Markov} sources},
  author={Luo, J. and Pappas, N.},
  journal={Available on arXiv:2507.18514},
month = {July},
  year={2025}
}

@article{Cosandal2024,
  title={Multi-threshold {AoII}-optimum sampling policies for {CTMC} information sources},
  author={Cosandal, I. and Akar, N. and Ulukus, S.},
  journal={Available on arXiv:2407.08592},
  month={July},
  year={2024}
}

@ARTICLE{nail_hoca_freshness,
  author={Akar, Nail and Ulukus, Sennur},
  journal={IEEE Transactions on Communications}, 
  title={Query-Based Sampling of Heterogeneous CTMCs: Modeling and Optimization With Binary Freshness}, 
  year={2024},
  volume={72},
  number={12},
  pages={7705-7714},
  doi={10.1109/TCOMM.2024.3409542}}

@article{nail_hoca_20_ocak,
      title={Utilizing the Perceived Age to Maximize Freshness in Query-Based Update Systems}, 
      author={Sahan Liyanaarachchi and Sennur Ulukus and Nail Akar},
        journal={Available on arXiv:2601.14075},
        month={January},
      year={2026} 
}

@article{Cosandal2025,
  title={Minimizing Functions of Age of Incorrect Information for Remote Estimation},
  author={Cosandal, I. and Ulukus, S. and Akar, N.},
  journal={Available on arXiv:2504.10451},
 month={April},
  year={2025}
}

@inproceedings{Luo2024,
  author    = {J. Luo and N. Pappas},
  title     = {Minimizing the Age of Missed and False Alarms in Remote Estimation of {Markov} Sources},
  booktitle = {Proc. ACM MobiHoc},
  year      = {2024},
  pages     = {381--386},
  doi       = {10.1145/3641512.3690161}
}

@ARTICLE{Kaswan_mutations,
  author={Kaswan, P. and Ulukus, S.},
  journal={IEEE Transactions on Communications}, 
  title={Misinformation Spread in Gossip Networks: The Influence of Transmission Mutations}, 
month={September},
  year={2025},
  volume={73},
  number={9},
  pages={7711-7723},
  doi={10.1109/TCOMM.2025.3552298}}

@INPROCEEDINGS{bastopcu_binary_fresh,
  author={Bastopcu, M. and Buyukates, B. and Ulukus, S.},
  booktitle={2021 IEEE Globecom Workshops (GC Wkshps)}, 
  title={Gossiping with Binary Freshness Metric},
month={December},
  year={2021},
  volume={},
  number={},
  pages={1-6},
  doi={10.1109/GCWkshps52748.2021.9682174}}

@article{maranzatto_gossip,
  title={Information Degradation and Misinformation in Gossip Networks},
  author={Maranzatto, T. J. and Srivastava, A. and Ulukus, S.},
  journal={Available on arXiv:2501.13086},
month={January},
  year={2025}
}

\end{document}